\newcommand{\KDE}[2]{\mathrm{KDE}_{#1}(#2)}
\DeclareMathOperator{\E}{E}
\newlist{compactenumi}{enumerate}{5}%
\setlist[compactenumi]{topsep=0pt,itemsep=-1ex,partopsep=1ex,parsep=1ex,%
   label=(\roman*)}%
\newlist{compactenuma}{enumerate}{5}%
\setlist[compactenuma]{topsep=0pt,itemsep=-1ex,partopsep=1ex,parsep=1ex,%
   label=(\alph*)}%
\newtheorem{theorem}{Theorem}
\newtheorem{lemma}[theorem]{Lemma}
\newtheorem{corr}[theorem]{Corollary}
\theoremstyle{definition}
\newtheorem{definition}[theorem]{Definition}
\begin{document}

%

%

\title{DEANN: Speeding up Kernel-Density Estimation using Approximate Nearest Neighbor Search}

\author{ Matti Karppa$^{1,2}$
  \and
  Martin Aum\"uller$^{1}$
  \and
  Rasmus Pagh$^{2,3}$
}

\date{\today}

\maketitle

\begin{abstract}
  Kernel Density Estimation (KDE) is a nonparametric method for
  estimating the shape of a density function, given a set of samples from the
  distribution. 
  Recently, \emph{locality-sensitive hashing}, originally proposed as a tool for nearest neighbor search, has been shown to enable fast KDE data structures.
  However, these approaches do not take advantage of the many other advances that have been made in algorithms for nearest neighbor algorithms.
  We present an algorithm called Density Estimation from
  Approximate Nearest Neighbors (DEANN) where we apply Approximate Nearest
  Neighbor (ANN) algorithms as a \emph{black box} subroutine to compute an unbiased KDE.
  The idea is to find points that have a large contribution to the KDE using ANN, compute their
  contribution exactly, and approximate the remainder with Random
  Sampling (RS). We present a theoretical argument that supports the
  idea that an ANN subroutine can speed up the
  evaluation. Furthermore, we provide a C++ implementation with a
  Python interface that can make use of an arbitrary ANN
  implementation as a subroutine for kernel density estimation. We show
  empirically that our implementation outperforms state of the art
  implementations in all high dimensional datasets we considered, and
  matches the performance of RS in cases where the ANN yield no gains
  in performance.
\end{abstract}

\footnotetext[1]{IT University of Copenhagen}
\footnotetext[2]{BARC}
\footnotetext[3]{University of Copenhagen}

\section{Introduction}
\label{sect:introduction}

\emph{Kernel Density Estimation (KDE)} is a nonparametric method for
estimating the shape of a density function, given a sample from the
distribution. For a \emph{dataset}~$X\subseteq \mathbb R^d$ and a \emph{kernel
function}~$K_h : \mathbb R^d \times \mathbb R^d \to [0,1]$, the kernel
density estimate of the \emph{query vector}~$y\in\mathbb R^d$ is given by
\begin{equation}
  \label{eq:kdeintro}
  \KDE{X}{y} = \frac{1}{|X|}\sum_{x\in X} K_h(x,y) \, .
\end{equation}
A common choice for the kernel function is the \emph{Gaussian kernel}
\begin{equation}
  \label{eq:gaussiankernel}
  K_h(x,y) = \exp\left(-\frac{||x-y||_2^2}{2h^2}\right) \, ,
\end{equation}
where the constant $h>0$ is the \emph{bandwidth} parameter. In the
one-dimensional case, the KDE has a simple interpretation with this
kernel function: given a set of points, plot a Gaussian Probability
Density Function (PDF) centered at each point, and the KDE is the
density function we get by taking the average of all these PDFs at
each point.
The bandwidth is thus the variance parameter, controlling the width of
each bell curve. The KDE may thus be viewed as a generalization of the
histogram with soft bins, and is routinely used for smoothing with
libraries such as Seaborn.\footnote{\url{https://seaborn.pydata.org/},
  see particularly the function \texttt{kdeplot}.}


The Gaussian kernel is an example of a \emph{radially decreasing
  kernel}, that is, its value depends only on the \emph{distance}
between the two operands~$x$ and~$y$, and is monotonically decreasing,
exponentially so. This family includes, for example,~the \emph{exponential
  kernel}~$K_h=\exp\left(-\frac{||x-y||_2}{h}\right)$ and the
\emph{Laplacian kernel} $K_h=\exp\left(-\frac{||x-y||_1}{h}\right)$.
Other common kernels
include the Epanechnikov kernel, the rectangular (or tophat) kernel,
or the triangular (or linear) kernel~\citep[Chapter~3]{Silverman:1986}, see also~\cite[Section~2.8.2]{sklearn-ug:2021}.
Though our methods will apply to any radial kernel, we focus on the exponentially decreasing radial kernels.

The KDE is easily generalized into the multivariate case. The
bandwidth may also be generalized into a cross-dimensional matrix that
corresponds to the covariance matrix, but we restrict ourselves to
scalar constant bandwidth. For kernels dependent only on the distance
between points, the bandwidth parameter can be seen as a scaling
parameter for the distances, and in practical applications, the choice
of proper bandwidth is important to ensure that the KDE values are
meaningful, that they show essential features of the underlying
distribution without becoming overly smooth while at the same time
avoiding the introduction of sampling
artifacts~\citep{JonesMS:1996}. It is immediate from
Equation~\eqref{eq:gaussiankernel} that, if we let $h\to \infty$, the
contribution of each summand in Equation~\eqref{eq:kdeintro}
approaches 1; conversely, if we let $h\to 0$, only the nearest
neighbors have significant contribution to the sum.

The KDE has seen use in applications such as estimating gradient lines
of densities~\citep{Arias-CastroMP:2016} and outlier
detection~\citep{SchubertZK:2014}. In machine learning, KDE is used in
classification~\citep{GanB:2017}.

The problem with a na\"ive application of Equation~\eqref{eq:kdeintro}
to compute the KDE value is that the sum depends on \emph{all} points
in the dataset; that is, an individual query requires $\Omega(nd)$
operations. If the number of queries is large, this may be prohibitively
expensive. An immediate improvement over the na\"ive summation is to
use Random Sampling (RS): it can be shown that computing the KDE on a
subset of $m=O(\frac{1}{\varepsilon^{2}\tau})$ points, sampled uniformly at
random with or without repetition, yields an unbiased estimator that
provides a relative $(1+\varepsilon)$-approximation guarantee on KDE
values in the excess of $\tau$, with constant probability.%
\footnote{If $\mu \geq \tau$ is the KDE value, the estimate $E$
  produced by the algorithm
  satisfies $\max\{E/\mu, \mu/E\} \leq (1+\varepsilon)$ with constant probability.}
Despite
this simplicity, it has turned out to be difficult to improve on RS
asymptotically whilst preserving theoretical guarantees in high
dimensions~\citep{CharikarS:2017}.

\subsection{Our contribution}
\label{sect:ourcontribution}

In this paper: 
\begin{compactenumi}
  \item We introduce an algorithmic approach to speed up kernel density
    estimation using approximate nearest neighbor algorithms as a
    black box. We call our approach \emph{DEANN}, for Density Estimation from
    Approximate Nearest Neighbors, 
  \item We provide a theoretical justification for the unbiasedness, and for the correctness and viability of our approach on real-world data, and
  \item We report on an extensive experimental study that compares our implementation to previous state-of-the-art approaches.
\end{compactenumi}
All of our code is available
online\footnote{\url{https://github.com/mkarppa/deann} } under the MIT
license, inlcuding 
the experimental
pipeline.\footnote{\url{https://github.com/mkarppa/deann-experiments}} 
The code includes dataset generation 
and preprocessing as well as post-processing of results, allowing for reproducibility and 
serving as a starting point for future work.

In more detail, a central idea in the attempt to speed up the evaluation of KDE sums
of the form of Equation~\eqref{eq:kdeintro} is to split the sum into
near and far components, depending on the distance to the dataset points
from the query vector.
We then wish to compute the contributions of the near points
exactly, and approximate the contribution of the far away points. 
However, we cannot hope to retrieve the actual nearest neighbors of the query point in a high-dimensional space efficiently, 
so we resort to ANN and compute the exact contribution of an \emph{approximate nearest neighbors} set.
Combining ANN and random sampling na\"ively does not result in an unbiased estimator, but 
we show how to efficiently correct for this bias. 
In fact, we obtain an estimator that is unbiased \emph{regardless} of
the
quality of the ANN data structure.
Only the variance of the resulting estimator is affected by the quality of the nearest neighbors approximation.

In Section~\ref{sect:theory} we formally define the DEANN algorithm,
prove that it is an unbiased estimator of the KDE value, and provide
theoretical arguments that support the idea that (and when) nearest
neigbors can help in the estimation of KDE values.  In
Section~\ref{sect:implementation}, we discuss our actual C++
implementation with a Python interface that can utilize an arbitrary
ANN implementation as a black box, and show in
Section~\ref{sect:experiments} that the result performs well in a
practical experimental setting. Due to lack of space, we have
relegated some of the additional experiments into the appendix.

\textbf{Limitations.} While our work is very general, this
generality also manifests itself in that we have so far no
theoretically grounded way to choose the parameters except empirical
grid search of the parameter space. Also, we are dependent on the ANN
subroutine which means we cannot provide a theoretical runtime
analysis for the algorithm without knowing the internals of the
ANN algorithm.

\subsection{Related work}

\textbf{Kernel density estimation.}
Three independent lines of research can be identified based on
space-partitioning trees, data sparsification, and Locality-Sensitive
Hashing (LSH). Methods based on creating a tree structure for partitioning
the search space
include~\citep{GrayM:2000,GrayM:2003,LeeGM:2005,LeeG:2008,MorariuSRDD:2008,RamLMG:2009},
but these methods are prone to suffer from the curse of
dimensionality. An interesting development of this line of research is
ASKIT~\citep{MarchXB:2015} that is in some cases able to perform also 
with high dimensional data if the data exhibits suitable structure;
the authors provide an implementation as free software.

In particular, \cite{MarchXB:2015} also use the idea of splitting up the contributions of near and far points,
but compute the contribution of far points in a different way. 
They prune the KDE computation in a tree-based space partitioning by approximating the contributions to the KDE value 
during a sub-tree traversal.
To apply this pruning, they run a bottom-up phase in the tree construction.
For each node in the tree, they look at the nearest neighbor 
information \emph{among the nodes in the sub-tree} and enrich these results with random samples. 
From that, they can store a short summary in the node.
This allows them to prune the computation at intermediate nodes in the top-down traversal for points 
that are guaranteed to be far away from the query.
In contrast to the approach of \cite{MarchXB:2015}, 
we use simple, data-independent random sampling, which is not only faster but also has the benefit of providing an unbiased estimator.

A second line of research includes \emph{$\varepsilon$-samples} or
\emph{coresets}~\citep{Phillips:2013,ZhengJPL:2013,PhillipsT:2020},
subsamples of the data that offer approximation guarantees.
Optimal coresets are often constructed as random samples with high-probability guarantees, and thus offer performance similar to RS.

The third line of work was initiated with the Hashing
Based Estimators (HBE) of~\cite{CharikarS:2017}. They applied importance
sampling to model KDE values through the collision probability of
Euclidean Locality Sensitive Hashing
(ELSH)~\citep{DatarIIM:2004}. Follow-up work includes Hashing Based
Sketches (HBS)~\citep{SiminelakisRBCL:2019} that was empirically shown
to outperform ASKIT, and the work of~\citet{BackursIW:2019} who presented an improvement
on the space usage. 
Very recently,~\cite{CharikarKNS20} further improved the asymptotic running time and space complexity in this line of research by using data-dependent LSH~\citep{AndoniLRW:2017}.

A more detailed discussion of the different methods 
is presented in Appendix~\ref{app:relatedwork}.

\textbf{Approximate Nearest Neighbor Search.}
Nearest neighbor search is a key primitive in many data mining and machine learning applications.
If vectors are embedded in a high-dimensional space, as is standard in computer vision~\citep{NetzerWCBWN:2011} or natural language processing~\citep{PenningtonSM:2014}, \emph{exact} nearest neighbor search becomes difficult,
an instance of the curse of dimensionality.

A long line of research focused on providing efficient implementations to find \emph{approximate} nearest neighbors. 
While these approaches often lack theoretical guarantees, they provide
a large speed-up over an exact linear scan with only a small loss in accuracy on real-world data;
see for example the large-scale evaluation study in~\cite{AumullerBF20}.
Several techniques can be used to build efficient ANN systems:
graph-based approaches, such as~\cite{Iwasaki18} and \cite{MalkovY20}, provide fast
query times but are expensive in preprocessing; cluster-based
techniques like~\cite{JohnsonDJ:2017} and \cite{GuoSLGSCK20} feature faster index
building times with a small loss in throughput. 
LSH-based approaches such as~\cite{AndoniILRS15} and \cite{AumullerCPV19} give
theoretical, probabilistic guarantees on the result quality, but are
often slower than the aforementioned approaches in practice. 


\section{Preliminaries}
\label{sect:preliminaries}

We write $[n] = \{0, 1, \ldots, n-1\}$. We say that a bijection $\pi\colon
[n] \to [n]$ is a \emph{permutation}.

We define the KDE problem formally as follows.
\begin{definition}[Kernel Density Estimate]
  Given a dataset $X=\{x_0,x_1,\ldots,x_{n-1}\}\subseteq\mathbb R^d$ of
  $d$-dimensional vectors, a constant bandwidth $h>0$, a kernel
  function $K_h : \mathbb R^d\times\mathbb R^d \to \mathbb R$, and a
  query vector $y\in\mathbb R^d$, we say that the Kernel Density
  Estimate (KDE) of $y$ is
  \[
    \mathrm{KDE}_X(y) = \frac 1n \sum_{i=0}^{n-1}K_h(x_i,y) \, .
  \]
  We often write $\mu = \mathrm{KDE}_X(y)$ when $y$, $X$, $h$, and
  $K_h$ are clear from the context.
\end{definition}

We call kernels that are monotonically decreasing functions of the
distance between a pair of points radially decreasing. If the
kernel $K_h$ is a function of the Euclidean distance of the pair of
points, such as the Gaussian or exponential kernels, we say it 
is \emph{Euclidean}.

Given the dataset $X\subseteq\mathbb{R}^d$ and a query vector $y\in\mathbb{R}^d$,
we denote with $(x'_0, \ldots, x'_{n-1})$ the sequence of dataset vectors sorted 
by distance to $y$.

We say that a random variable $Z$ is an unbiased estimator of $\mu$ if $E[Z]=\mu$. We present
the following well-known result that the KDE can be efficiently
approximated with random sampling. 
\begin{restatable}[Random Sampling]{lemma}{rslemma}
  \label{lem:rs}
  Let $X \subseteq \mathbb{R}^d, y \in \mathbb{R}^d$.
  Let $\tau \in (0,1)$ such that  $\KDE{X}{y} \geq \tau$.
  Drawing a uniform random sample $X'\subseteq X$
  (with repetition) of size $m=O(\frac{1}{\varepsilon^2\tau})$
  and computing $\KDE{X'}{y}$ yields an unbiased
  $(1+\varepsilon)$-approximation of $\KDE{X}{y}$, with constant probability.
\end{restatable}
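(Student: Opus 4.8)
The plan is to realize the sampled estimate as an average of $m$ independent, identically distributed, bounded random variables, obtain unbiasedness from linearity of expectation, and derive the relative error guarantee from a \emph{variance} bound via Chebyshev's inequality rather than a generic tail bound. Concretely, write $\mu = \KDE{X}{y}$ and let $Z_1,\dots,Z_m$ be the points drawn uniformly and independently from $X$ (with repetition), so that the estimator is $\hat\mu = \KDE{X'}{y} = \frac{1}{m}\sum_{j=1}^{m} Y_j$ with $Y_j = K_h(Z_j,y)$.

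Unbiasedness is immediate. Since each $Z_j$ is uniform over the $n$ dataset points, $E[Y_j] = \frac{1}{n}\sum_{i=0}^{n-1} K_h(x_i,y) = \mu$, and hence $E[\hat\mu] = \mu$ by linearity of expectation. This already establishes the unbiasedness claim; the work is in the concentration.

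For concentration, the key observation is to control the variance. Because the kernel takes values in $[0,1]$, we have $Y_j^2 \le Y_j$, so $\mathrm{Var}(Y_j) \le E[Y_j^2] \le E[Y_j] = \mu$. By independence, $\mathrm{Var}(\hat\mu) = \frac{1}{m}\mathrm{Var}(Y_1) \le \mu/m$. Chebyshev's inequality then yields
\[
  \Pr\!\big[\,|\hat\mu - \mu| > \varepsilon\mu\,\big]
  \;\le\; \frac{\mathrm{Var}(\hat\mu)}{\varepsilon^2\mu^2}
  \;\le\; \frac{1}{m\,\varepsilon^2\mu}
  \;\le\; \frac{1}{m\,\varepsilon^2\tau},
\]
where the last step uses the promise $\mu \ge \tau$. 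Choosing $m = O(1/(\varepsilon^2\tau))$ with a sufficiently large constant drives this probability below, say, $1/3$; the additive guarantee $|\hat\mu-\mu|\le\varepsilon\mu$ then gives the two-sided multiplicative $(1+\varepsilon)$-approximation of the footnote after absorbing a constant factor into $\varepsilon$.

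The main obstacle to avoid is the tempting but lossy route of treating $\hat\mu$ as an average of $[0,1]$-bounded variables and invoking Hoeffding's inequality directly: that only delivers $m = O(1/(\varepsilon^2\mu^2))$, which can be far larger than the target when $\mu$ is small. The entire gain comes from exploiting $Y_j^2 \le Y_j$ to replace the $\mu^2$ in the denominator by a single $\mu$; together with $\mu \ge \tau$, this is precisely what trades the quadratic dependence on the KDE value for the linear dependence on $\tau$ in the required sample size.
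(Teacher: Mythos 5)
Your proof is correct, but it takes a genuinely different route from the paper. The paper proves Lemma~\ref{lem:rs} via the multiplicative Chernoff bound (Lemma~\ref{lem:chernoff}): it bounds $\Pr[Z > (1+\epsilon)\E[Z]]$ and $\Pr[Z < (1-\epsilon)\E[Z]]$ by $\exp(-\Theta(\epsilon^2 m\tau))$ and solves for $m$. You instead bound the second moment, using $Y_j \in [0,1] \Rightarrow Y_j^2 \le Y_j$ to get $\mathrm{Var}(Y_j) \le \mu$, and apply Chebyshev. Both exploit the same structural fact --- that for $[0,1]$-valued variables the variance is at most the mean, which is exactly what makes the multiplicative Chernoff bound beat the additive Hoeffding bound here --- and both yield $m = O\left(\frac{1}{\varepsilon^2\tau}\right)$ for constant success probability. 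The trade-off: your argument is more elementary and in fact only requires pairwise independence of the samples, but it pays linearly in the failure probability, $m = O\left(\frac{1}{\delta\varepsilon^2\tau}\right)$, whereas the paper's Chernoff route gives $m = O\left(\frac{\ln(1/\delta)}{\varepsilon^2\tau}\right)$, a point the paper makes explicitly after its proof (and which matters if one wants to drive $\delta$ down without median-of-means amplification). One small point worth making explicit in your write-up: the two-sided additive bound $|\hat\mu - \mu| \le \varepsilon\mu$ gives $\mu/\hat\mu \le 1/(1-\varepsilon)$, so the footnote's guarantee $\max\{\hat\mu/\mu, \mu/\hat\mu\} \le 1+\varepsilon$ indeed follows only after rescaling $\varepsilon$ by a constant, as you note; the paper's proof has the same feature since it also bounds both tails at relative error $\epsilon$.
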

\begin{proof}
  See Appendix~\ref{app:rsproof}.
\end{proof}
The bound on $m$ in Lemma~\ref{lem:rs} is tight up to a constant
for worst-case input (see Appendix~\ref{app:rsproof}).

\section{Algorithmic approach and theoretical foundations}
\label{sect:theory}

\subsection{Decomposing the KDE}
We start by proving the following lemma that states that 
the KDE of a query $y$ can be estimated from individual estimates 
on a partition of the dataset.

\begin{lemma}
  \label{lem:composition}
  Let the $n$-vector dataset $X\subseteq \mathbb R^d$ be partitioned
  into two non-empty parts $A,B\subseteq \mathbb{R}^d$, that is, $X=A\cup B$
  and $A \cap B = \emptyset$. Let $y\in\mathbb{R}^d$ be an arbitrary
  query vector, and let $Z_A$ and $Z_B$ be unbiased estimators of
  $\mathrm{KDE}_A(y)$ and $\mathrm{KDE}_B(y)$, respectively. Then,
  \[
    Z' = \frac{|A|}{n}Z_A + \frac{|B|}{n}Z_B
  \]
  is an unbiased estimator for $\mathrm{KDE}_X(y)$.
\end{lemma}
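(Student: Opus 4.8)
The plan is to prove this directly from the definition of the KDE together with the linearity of expectation. The key observation is that the KDE is essentially a sum of kernel contributions, and the $\frac{1}{n}$ normalization in the definition of $\KDE{X}{y}$ can be redistributed among the two parts of the partition.

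First I would expand the target quantity using the definition of the KDE. Since $X = A \cup B$ is a disjoint partition, the defining sum splits as
\begin{equation*}
  \KDE{X}{y} = \frac{1}{n}\sum_{x\in X} K_h(x,y) = \frac{1}{n}\sum_{x\in A} K_h(x,y) + \frac{1}{n}\sum_{x\in B} K_h(x,y).
\end{equation*}
The next step is to recognize the per-part KDE values inside this expression: by definition $\KDE{A}{y} = \frac{1}{|A|}\sum_{x\in A} K_h(x,y)$ and similarly for $B$. Substituting these in gives the clean identity
\begin{equation*}
  \KDE{X}{y} = \frac{|A|}{n}\KDE{A}{y} + \frac{|B|}{n}\KDE{B}{y}.
\end{equation*}
This is the deterministic analogue of the claimed estimator, and it is the structural heart of the argument.

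With this identity in hand, the final step is to take the expectation of $Z'$ and invoke linearity of expectation together with the hypotheses that $Z_A$ and $Z_B$ are unbiased. Concretely,
\begin{equation*}
  \E[Z'] = \frac{|A|}{n}\,\E[Z_A] + \frac{|B|}{n}\,\E[Z_B] = \frac{|A|}{n}\KDE{A}{y} + \frac{|B|}{n}\KDE{B}{y} = \KDE{X}{y}.
\end{equation*}
Here the coefficients $\frac{|A|}{n}$ and $\frac{|B|}{n}$ are deterministic constants, so they pass through the expectation freely, and the middle equality uses $\E[Z_A] = \KDE{A}{y}$ and $\E[Z_B] = \KDE{B}{y}$.

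I do not anticipate a genuine obstacle here, since the statement is essentially a bookkeeping exercise about how the normalization constant splits across a partition. The only point requiring mild care is ensuring that the weights are exactly $|A|/n$ and $|B|/n$ rather than, say, a uniform $1/2$; this is why the proof hinges on correctly tracking the $1/|A|$ and $1/|B|$ normalizations hidden inside $\KDE{A}{y}$ and $\KDE{B}{y}$ and observing that they cancel against the $|A|/n$ and $|B|/n$ weights to leave the single global $1/n$ factor. The non-emptiness of $A$ and $B$ is needed only so that $\KDE{A}{y}$ and $\KDE{B}{y}$ are well defined.
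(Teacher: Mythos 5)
Your proposal is correct and follows essentially the same argument as the paper's proof: linearity of expectation combined with the observation that the $\frac{|A|}{n}$ and $\frac{|B|}{n}$ weights cancel the $\frac{1}{|A|}$ and $\frac{1}{|B|}$ normalizations to recover the global $\frac{1}{n}$ factor. The only cosmetic difference is that you state the deterministic decomposition of $\mathrm{KDE}_X(y)$ first and then take expectations, whereas the paper starts from $\E[Z']$ and performs the same manipulations in reverse order.
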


\begin{proof}
  By linearity of expectation and the definition of unbiased
  estimators, we have
  \begin{align*}
    \E[Z'] & = \E\left[\frac{|A|}{n}Z_A + \frac{|B|}{n}Z_B\right]
           = \frac{|A|}{n}\E[Z_A] + \frac{|B|}{n}\E[Z_B] \\
          & = \frac{|A|}{n}\frac{1}{|A|}\sum_{a\in A} K_h(a,y) +
            \frac{|B|}{n}\frac{1}{|B|}\sum_{b\in B}K_h(b,y) \\
          & = \frac{1}{n}\sum_{x\in A\cup B} K_h(x,y) 
           = \mathrm{KDE}_X(y) \enspace .
  \end{align*}
  
\end{proof}

\subsection{Algorithmic approach}

Given a query $y\in\mathbb R^d$ and a dataset
$X = \{x_0,x_1,\ldots,x_{n-1}\}\subseteq \mathbb R^d$ of $n$ points,
assume we have access to a black box subroutine $\mathrm{ANN}_X(y)$ that
returns (the indices of) $k$ approximate nearest neighbors $X_1 \subseteq X$ of
$y\in\mathbb R^d$. We can apply Algorithm~\ref{alg:annest} to compute
an unbiased estimate $\widetilde{\mathrm{KDE}}_X(y)$ of the KDE.

The algorithm works by partitioning the dataset into two parts: one
where all data points are close to the query vector, and the
remainder. The contribution of the near vectors is computed exactly,
and the remainder is approximated by random sampling. This idea bears
resemblance to that of the hierarchical tree methods, but is
expressed very concisely, and the nearest neighbors algorithm
is treated as black box. Indeed, the algorithm is very general: it admits
arbitrary kernels, metrics, and ANN algorithms, assuming
they are compatible with one another.

The algorithm has two parameters: the number of neighbors
to query~$k$
and the number of random samples~$m$. At the extremes, when
either $k$ or $m$ is zero, the algorithm either falls back to simple
random sampling, or simply discards all far points. Both cases may be
appropriate for certain datasets at very small or very large bandwidth
values. This also guarantees that the algorithm performs
asymptotically at least as well as simple random sampling.

Since $\textrm{KDE}_{X_1}(y)$ is the exact contribution of $k$ 
data points to the KDE of $y$, and a random sample on $X\setminus X_1$
results in an unbiased estimator of $\textrm{KDE}_{X\setminus X_1}(y)$,
we may conclude by Lemma~\ref{lem:composition} that Algorithm~\ref{alg:annest}
returns an unbiased estimator.

\begin{algorithm}[t]
  \caption{DEANN.}
  \label{alg:annest}
  \begin{tabular}{@{\hspace*{\algorithmicindent}}lp{0.72\linewidth}}
    \textbf{Input:} & Dataset~$X=\{x_0,x_1,\ldots,x_{n-1}\}\subseteq
                      \mathbb R^d$,
                      query vector~$y\in\mathbb R^d$,
                      kernel function $K_h\colon \mathbb R^d \times \mathbb
                      R^d \to \mathbb R$,
                      approximate nearest neighbor function
                      $\mathrm{ANN}_X\colon \mathbb R^d \to [n]^k$. \\
    \textbf{Output:} & Unbiased est.~$\widetilde{\mathrm{KDE}}_X(y)$ of $\KDE{X}{y}$.
  \end{tabular}
  \begin{algorithmic}[1]
    \Function{DEANN}{$X$, $K_h$, $\mathrm{ANN}_X$, $y$}
    \State $X_1 \gets \{x_i : i \in \mathrm{ANN}_X(y)\}$. \Comment{Find
      $k$ ANN}
    \State $X_2 \gets X\setminus X_1$. \Comment{$\{X_1,X_2\}$ is a
      partition of $X$.}
    \State $Z_1 \gets \KDE{X_1}{y} = \displaystyle \frac 1k\sum_{x\in X_1} K_h(x,y)$.
    \State $S \gets$ size-$m$ uniform random sample from~$X_2$. 
    \State $Z_2 \gets \KDE{S}{y} = \displaystyle \frac 1m \sum_{x \in
      S} K_h(x,y)$.
    \State $\widetilde{\mathrm{KDE}}_X(y) \gets \displaystyle \frac
    kn Z_1 + \frac{n-k}{n} Z_2$.
    \State \Return $\widetilde{\mathrm{KDE}}_X(y)$.
    \EndFunction
    \end{algorithmic}
  \end{algorithm}
  
\begin{corr}
  \label{cor:unbiased:estimate}
The value $\widetilde{\mathrm{KDE}}_X(y)$ in $\textnormal{DEANN}$ (Algorithm~\ref{alg:annest}) is an unbiased estimator of $\mathrm{KDE}_X(y)$.
\end{corr}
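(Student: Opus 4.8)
The plan is to apply Lemma~\ref{lem:composition} directly to the partition produced by the algorithm. Taking $A = X_1$ and $B = X_2$, we have $|A| = k$ and $|B| = n-k$, so the quantity $\frac{k}{n} Z_1 + \frac{n-k}{n} Z_2$ assembled on the final line is exactly the estimator $Z'$ appearing in that lemma. It therefore suffices to verify the two hypotheses: that $Z_1$ is an unbiased estimator of $\KDE{X_1}{y}$ and that $Z_2$ is an unbiased estimator of $\KDE{X_2}{y}$. The first holds trivially, since $Z_1 = \KDE{X_1}{y}$ is computed exactly and is thus a degenerate (deterministic) unbiased estimator of itself.

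For the second hypothesis I would spell out the standard random-sampling argument, which is also the unbiasedness half of Lemma~\ref{lem:rs}. Writing the sample as $S = (s_1,\ldots,s_m)$ with each $s_j$ drawn uniformly and independently from $X_2$, linearity of expectation gives $\E[Z_2] = \frac{1}{m}\sum_{j=1}^{m} \E[K_h(s_j,y)] = \frac{1}{|X_2|}\sum_{x \in X_2} K_h(x,y) = \KDE{X_2}{y}$, so $Z_2$ is unbiased for $\KDE{X_2}{y}$. With both hypotheses in hand, Lemma~\ref{lem:composition} immediately yields $\E[\widetilde{\mathrm{KDE}}_X(y)] = \KDE{X}{y}$.

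The one point genuinely requiring care, and the part I expect to be the main obstacle, is that the set $X_1$ returned by $\mathrm{ANN}_X(y)$ may itself be random whenever the ANN subroutine is randomized; then the partition $\{X_1,X_2\}$, the mixing weights, and both sub-estimators are all random objects, and the argument above applies only to a \emph{fixed} partition. I would resolve this by conditioning on the output of the ANN call. For any fixed realization of $X_1$ (and the induced $X_2 = X \setminus X_1$), the two hypotheses hold as above and Lemma~\ref{lem:composition} gives $\E[\widetilde{\mathrm{KDE}}_X(y) \mid X_1] = \KDE{X}{y}$. Since the right-hand side is a constant not depending on $X_1$, the tower property then delivers the same value for the unconditional expectation. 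This conditioning step is precisely what substantiates the paper's claim that unbiasedness holds \emph{regardless} of the quality of the ANN: whatever points happen to land in $X_1$, the uniform sampling on $X_2$ always estimates the complementary contribution without bias. The only remaining loose ends are the degenerate cases $k=0$ or $m=0$, where one part of the partition is empty and Lemma~\ref{lem:composition} does not formally apply; in those cases the estimator collapses to pure random sampling or to the exact near-point contribution, and unbiasedness is checked directly.
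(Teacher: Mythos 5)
Your proposal is correct and follows essentially the same route as the paper: apply Lemma~\ref{lem:composition} with $A = X_1$ (whose contribution is computed exactly, hence trivially unbiased) and $B = X_2$ (estimated without bias by uniform random sampling). Your conditioning on the ANN output via the tower property, and your treatment of the degenerate cases $k=0$ and $m=0$, are welcome points of rigor that the paper leaves implicit, but they do not change the underlying argument.
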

The estimate is unbiased no matter the quality of the near neighbors returned by $\mathrm{ANN}_X(y)$. 
This property is crucial: it allows us to use fast ANN implementations in practice that have no theoretical guarantees on the quality of their answers.



\subsection{Contribution of nearest neighbors in real-world datasets}

According to~\cite{DongWJCL08}, the distance distribution of distances from query points follows a Gamma distribution in many real-world datasets.
While the shape and scale parameters of the distribution may differ widely between various datasets, they can be estimated efficiently from a small sample.
As~\cite{DongWJCL08} observe, the same is true for the distance distribution of the $k$-th nearest neighbors.
In particular, \cite{PagelKF00} propose that the average distance of the $k$-th nearest neighbor under squared Euclidean distance can be modeled as a power-law function $\alpha (k/n)^\beta$, where $\alpha > 0$ is a constant depending on $d$, and $1/\beta > 1$ is the \emph{intrinsic dimensionality} of $X$. 

A rule of thumb for the selection of the bandwidth is to pick the median distance to the nearest neighbor as a bandwidth parameter~\citep{JaakkolaDH99}.
The following lemma shows that, given a distance distribution that follows a power-law distribution, this
bandwidth selection rule results in KDE values dominated by the contribution of a poly-logarithmic number of nearest neighbors. 
Deviating from this rule by much results in KDE values that
are \emph{meaningless}: too close to 0 or 1.

\begin{restatable}{lemma}{nncontributionlemma}
  \label{lem:nncontribution}
  Given $\alpha, 1/\beta > 0$, $X \subseteq \mathbb{R}^d$ with $|X| = n$, and $y \in \mathbb{R}^d$, assume that $||x'_i - y||_2^2 = \alpha ((i + 1)/n)^\beta$ for $i \in [n]$. 
  For the Gaussian kernel $K_h(x, y) = \exp\left(-||x -
    y||_2^2/(2h^2)\right)$,
  it holds that 
  \begin{compactenuma}
    \item If $h^2 = (\alpha/2) n^{-\beta}$, the contribution of the first $k=\Theta(\log^{1/\beta} n)$ nearest neighbors is a $(1+o(1))$-approximation of the KDE value.
    \item Let $\tau \in (0,1)$. If $h^2 \leq (\alpha/2) n^{-\beta}/\ln(1/\tau)$, $\textnormal{KDE}_X(y) \leq \tau$. 
    \item If $h^2 \geq \ln(1/(1 - \delta))\alpha/(2\beta)$, $\textnormal{KDE}_X(y) \geq 1 - \delta$.
  \end{compactenuma}
\end{restatable}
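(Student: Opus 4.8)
The plan is to reduce all three parts to the analysis of a single one-dimensional sum. Substituting the power-law distances into the Gaussian kernel and writing $j = i+1$, the KDE becomes
\[
  \mu = \frac1n\sum_{j=1}^{n}\exp\!\left(-c\,j^{\beta}\right),
  \qquad c = \frac{\alpha}{2h^{2}n^{\beta}},
\]
so that the whole lemma becomes a statement about how the scalar $c$ (equivalently, the bandwidth $h$) controls this sum. The summand is decreasing in $j$, which lets me sandwich the sum between integrals of $e^{-cx^{\beta}}$ whenever a bound sharper than a term-by-term estimate is needed; this integral-comparison setup is the common engine for all three parts.

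Part (b) I expect to be immediate. Since $j\ge 1$ and $\beta>0$ we have $j^{\beta}\ge 1$, so every term is at most $e^{-c}$, and the hypothesis $h^{2}\le(\alpha/2)n^{-\beta}/\ln(1/\tau)$ is exactly $c\ge\ln(1/\tau)$; averaging gives $\mu\le e^{-c}\le\tau$, with no integral estimate required. For part (a) I set $c=1$ (which is what $h^{2}=(\alpha/2)n^{-\beta}$ yields) and use the integral test to show that $\sum_{j\ge1}e^{-j^{\beta}}$ converges to a constant $C_{\beta}$, so $\mu=\Theta(1/n)$ and only the first few neighbors carry appreciable mass. Quantitatively I would bound the tail $\sum_{j>k}e^{-j^{\beta}}\le\int_{k}^{\infty}e^{-x^{\beta}}\,dx=O\!\big(k^{1-\beta}e^{-k^{\beta}}\big)$ via the substitution $u=x^{\beta}$ (an incomplete-gamma estimate). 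Choosing $k=\Theta(\log^{1/\beta}n)$ makes $k^{\beta}=\Theta(\log n)$, hence $e^{-k^{\beta}}=n^{-\Theta(1)}$ and the tail is $n^{-\Theta(1)}\operatorname{polylog}(n)=o(1)=o(C_{\beta})$; dividing by the full sum shows the first $k$ neighbors give a $(1+o(1))$-approximation of $\mu$.

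Part (c) is the step I expect to be the main obstacle, because it needs a clean closed-form \emph{lower} bound and the constants must line up with the claimed bandwidth threshold. The natural tool is convexity of $\exp$ (Jensen's inequality): $\mu\ge\exp\!\big(-a\,\bar m\big)$, where $a=\alpha/(2h^{2})$ and $\bar m=\frac1n\sum_{j=1}^{n}(j/n)^{\beta}$. Since $t\mapsto t^{\beta}$ is increasing, a Riemann-sum/integral comparison gives $\bar m=\frac1{\beta+1}+O(1/n)$, which I would bound by $1/\beta$ so as to absorb the lower-order term; then $\mu\ge e^{-a/\beta}$, and $\mu\ge1-\delta$ follows once the bandwidth is large enough that $a/\beta\le\ln(1/(1-\delta))$. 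The delicate bookkeeping is twofold: (i) verifying that the Riemann-sum error in $\bar m$ is genuinely negligible for the relevant range of $n$, and (ii) pinning down the exact form of the threshold — in particular checking that forcing $\mu$ close to $1$ requires $h^{2}$ to \emph{grow} as $\delta\to0$, i.e.\ that $\ln(1/(1-\delta))$ should enter in the \emph{denominator} of the bandwidth bound. This last point is where I would be most careful, since the sign of the dependence on $\delta$ is precisely what distinguishes a correct lower bound from a vacuous one.
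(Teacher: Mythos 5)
Your proposal is correct and follows essentially the same route as the paper: part~(b) is the identical term-by-term bound (every kernel value is at most $e^{-c}$ with $c=\alpha/(2h^2n^\beta)\ge\ln(1/\tau)$); part~(a) rests, as in the paper, on the head sum being a positive constant while every tail term is $n^{-\Theta(1)}$ once $k^\beta=\Theta(\log n)$ (the paper bounds the tail by $n\cdot o(1/n)$ term-by-term rather than via your incomplete-gamma integral, a cosmetic difference); and part~(c) uses Jensen's inequality, which is exactly the AM--GM step in the paper, combined with the same estimate $\frac1n\sum_{j=1}^n (j/n)^\beta < 1/\beta$.

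One remark on the point you flagged in~(c): your instinct is right, and it in fact exposes an inconsistency in the lemma as printed. Both your derivation and the paper's own proof yield the condition $h^2 \ge \alpha/\bigl(2\beta\ln(1/(1-\delta))\bigr)$, with $\ln(1/(1-\delta))$ in the \emph{denominator}, so that the required bandwidth grows as $\delta\to 0$; the statement as written, $h^2 \ge \ln(1/(1-\delta))\,\alpha/(2\beta)$, has the dependence inverted and is vacuous in exactly the sense you describe. (The paper's displayed computation also silently drops the factor $2$ in $2h^2$, but that only shifts the constant.)
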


\begin{proof}
  See Appendix~\ref{app:nncontributionproof}.
\end{proof}

\subsection{How nearest neighbors help random sampling}

While the previous subsection gave a theoretical reason why the
rule-of-thumb for bandwidth selection is useful in practice, it
assumed exact distances and ignored the fact that, in practice,
$\log^{1/\beta} n$ might be a large number.
In general, every partition of the dataset $X$ into $S$ and $X\setminus S$ in Algorithm~\ref{alg:annest} results in an unbiased estimator. However,
it is unclear how the random sampling approach improves the estimate when 
the contribution of the $k$-nearest neighbors is known. 
This is because the number of samples $m$ in
Algorithm~\ref{alg:annest} is independent of the size $n - |S|$ of $X
\setminus S$ (see Lemma~\ref{lem:rs}).
The following definition and the resulting lemma show that the larger the 
contribution of the nearest neighbors, the fewer samples suffice to 
obtain a $(1 + \varepsilon)$-approximation of the KDE value.

\begin{definition}
  Given $n \geq 1$, $\delta \in (0, 1)$, and $k \in [n]$, let $X \subseteq \mathbb{R}^d$ with $|X| = n$. Given $y \in \mathbb{R}^d$, we say that 
  the pair $(k, \delta)$ \emph{dominates} $\textrm{KDE}_X(y)$ if 
    $\sum_{i = 0}^{k-1} K_h(x'_i, y) = (1-\delta) \sum_{i = 0}^{n-1} K_h(x'_i, y)$.
\end{definition}

The following lemma says that if the KDE value is $(k, \delta)$-dominated, 
a $\delta$-fraction of random samples is sufficient to obtain a $(1+\varepsilon)$-approximation.

\begin{restatable}{lemma}{dominateslemma}
  \label{lem:dominates}
  Let $\varepsilon > 0$, and $\textnormal{KDE}_X(y) \geq \tau$. If $(k, \delta)$ dominates
  $\textnormal{KDE}_X(y)$, then using $m = \Theta\left(\frac{\delta}{\varepsilon^2\tau}\right)$ samples guarantees that with constant probability, $\widetilde{\textnormal{KDE}}_X(y)$ is a $(1+\varepsilon)$-approximation. 
\end{restatable}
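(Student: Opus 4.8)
The plan is to observe that in DEANN the only randomness enters through the sampling term $Z_2$, so the entire deviation of $\widetilde{\mathrm{KDE}}_X(y)$ from $\mu$ is carried by that term, which I can then control with Chebyshev's inequality. First I would unpack the domination hypothesis, taking $X_1=\{x'_0,\dots,x'_{k-1}\}$ to be the exact $k$ nearest neighbours (consistent with the ``known contribution of the $k$-nearest neighbors'' framing of this subsection). Then $(k,\delta)$-domination reads $\sum_{x\in X_1}K_h(x,y)=(1-\delta)\sum_{i=0}^{n-1}K_h(x'_i,y)=(1-\delta)n\mu$, so the deterministic term is $\frac kn Z_1=\frac1n\sum_{x\in X_1}K_h(x,y)=(1-\delta)\mu$, while the leftover mass is $\sum_{x\in X_2}K_h(x,y)=\delta n\mu$. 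Hence $\nu:=\mathrm{KDE}_{X_2}(y)=\delta n\mu/(n-k)$ and $\frac{n-k}{n}\E[Z_2]=\delta\mu$, and writing $D:=\widetilde{\mathrm{KDE}}_X(y)-\mu=\frac{n-k}{n}(Z_2-\E[Z_2])$ both re-confirms unbiasedness (Corollary~\ref{cor:unbiased:estimate}) and isolates the source of error.

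Next I would bound $\mathrm{Var}(Z_2)$. Since $Z_2$ averages $m$ i.i.d.\ samples $K_h(s_j,y)\in[0,1]$ drawn from $X_2$, each with mean $\nu$, the bounded-kernel inequality $\E[K_h(s_j,y)^2]\le\E[K_h(s_j,y)]=\nu$ gives $\mathrm{Var}(Z_2)\le\nu/m$. The crucial cancellation is that scaling by $(n-k)/n$ and substituting $\nu=\delta n\mu/(n-k)$ yields
\[
  \mathrm{Var}(D)=\Bigl(\tfrac{n-k}{n}\Bigr)^2\mathrm{Var}(Z_2)\le\frac{(n-k)\,\delta\mu}{n^2 m}\le\frac{\delta\mu}{m},
\]
so the troublesome factor $1/(n-k)$ is absorbed and the variance comes out proportional to the small quantity $\delta$ rather than to the full KDE value.

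Finally I would apply Chebyshev's inequality, $\Pr[|D|\ge\varepsilon\mu]\le\mathrm{Var}(D)/(\varepsilon\mu)^2\le\delta/(m\varepsilon^2\mu)\le\delta/(m\varepsilon^2\tau)$, using $\mu\ge\tau$; choosing $m=\Theta(\delta/(\varepsilon^2\tau))$ drives this probability below a constant such as $1/3$, so with constant probability $|\widetilde{\mathrm{KDE}}_X(y)-\mu|\le\varepsilon\mu$, i.e.\ a $(1+\varepsilon)$-approximation. I expect the only delicate point to be the bookkeeping of the $(n-k)/n$ factor together with the substitution for $\nu$: this is exactly where the improvement over the $\Theta(1/(\varepsilon^2\tau))$ bound of Lemma~\ref{lem:rs} materialises, the factor $\delta$ quantifying how much knowing the near-neighbour contribution shrinks the required sample size. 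A minor caveat to flag is the assumption that the retrieved set $X_1$ coincides with the true top-$k$; otherwise $\delta$ would need to be defined relative to the returned set.
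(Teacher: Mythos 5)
Your proof is correct, but it takes a genuinely different route from the paper's. The paper never computes a variance: it reduces the claim to Lemma~\ref{lem:rs} applied to the tail $(x'_k,\ldots,x'_{n-1})$, observing (i) that since the tail carries only a $\delta$-fraction of the total kernel mass, a relative error of $\varepsilon'=\varepsilon/\delta$ on the tail inflates the overall estimate by at most $\varepsilon'\delta\,\mathrm{KDE}_X(y)=\varepsilon\,\mathrm{KDE}_X(y)$, and (ii) that the tail density satisfies $\mathrm{KDE}_{(x'_k,\ldots,x'_{n-1})}(y)\geq\delta\tau$; plugging the pair $(\varepsilon',\delta\tau)$ into Lemma~\ref{lem:rs} gives $m=O\bigl(\tfrac{1}{(\varepsilon')^2\delta\tau}\bigr)=O\bigl(\tfrac{\delta}{\varepsilon^2\tau}\bigr)$. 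You instead bound the second moment of the sampling term directly and apply Chebyshev, using $K_h\in[0,1]$ to get $\mathrm{Var}(Z_2)\le\nu/m$ with $\nu=\delta n\mu/(n-k)$, so that the estimator's variance is at most $\delta\mu/m$. What each buys: the paper's argument is modular (it inherits the Chernoff-type, exponentially small failure probability of Lemma~\ref{lem:rs} and needs no new concentration inequality), while yours is self-contained, more elementary, and makes explicit the structural fact that domination shrinks the estimator's variance by a factor of $\delta$ --- arguably the cleaner explanation of \emph{why} fewer samples suffice. Your closing caveat (that $\delta$ must be defined relative to the set actually returned by the ANN) applies equally to the paper's proof, which also works with the exact top-$k$ set.

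Two small points. First, your displayed variance bound contains an algebra slip: $\bigl(\tfrac{n-k}{n}\bigr)^2\cdot\tfrac{\nu}{m}$ with $\nu=\delta n\mu/(n-k)$ equals $\tfrac{(n-k)\delta\mu}{nm}$, not $\tfrac{(n-k)\delta\mu}{n^2m}$; as written, the middle inequality is too strong by a factor of $n$ and need not hold, but the final bound $\mathrm{Var}(D)\le\delta\mu/m$ follows from the corrected expression, so nothing downstream is affected. Second, Chebyshev gives failure probability $O\bigl(\delta/(m\varepsilon^2\tau)\bigr)$, which decays only polynomially in $m$; this fully meets the lemma's ``constant probability'' requirement, but the paper's route gives exponential decay in $m$ for free, which matters if one later wants to amplify the success probability by simply increasing the sample size rather than by taking medians of repetitions.
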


\begin{proof}
  See Appendix~\ref{app:dominatesproof}.
\end{proof}

\section{Implementation and engineering choices}
\label{sect:implementation}

\textbf{Implementation.}
We have implemented our algorithm in C++, using
Intel MKL as backend for linear algebra and
vectorized array computations. The implementation can be used as a
Python module, and accepts arbitrary ANN libraries
as a black box through a Python interface. 
We
provide example interfaces for using scikit-learn
\texttt{NearestNeighbors} as a baseline, and
FAISS~\citep{JohnsonDJ:2017} as a practical ANN implementation. The
code is available
online\footnote{\url{https://github.com/mkarppa/deann}}
under the MIT
license
and includes the na\"ive algorithm, random sampling, and DEANN.


\textbf{Optimizations for Euclidean kernels.}
While Algorithm~\ref{alg:annest} is agnostic to the
choice of the kernel, some further optimizations are possible if we
restrict ourselves to Euclidean kernels. 
We make the following observation regarding the Euclidean norm.
Using of the identity
$||x-y||_2^2=||x||_2^2+||y||_2^2-2\left<x,y\right>$ enables the
use of the matrix-matrix multiplication primitive \texttt{GEMM}
to speed-up batch
evaluation of Euclidean distances, described in more detail in
Appendix~\ref{app:naive}.

%

\textbf{Optimizing random sampling.}
A practical limitation of the random sampling routine is that a direct
implementation would mandate random access to memory.  To make
effective use of a CPU's prefetching ability, data must be accessed in a
linear or otherwise well-predictable fashion.
We speed up our random sampling scheme by
permuting the dataset vectors during preprocessing.
We can then take a contiguous subset of the
permuted vectors as the sample which can also be combined with the
matrix multiplication optimization described above, using the
matrix-vector multiplication primitive \texttt{GEMV}. For
completeness, pseudocode is given in Appendix~\ref{app:rsp}.
For a single query, this \emph{permuted random
  sampling} amounts to random sampling without replacement; however,
we lose independence when considering multiple queries. 
Although problematic when facing an adversary, the results are equally good
in practice, as shown in the next section. 

\section{Experiments}
\label{sect:experiments}


\textbf{Implementations.}
All implementations considered in our experiments are listed in
Table~\ref{tbl:implementations}.
We disambiguate implementations from abstract algorithms by writing
the name of the implementation in typewriter typeface. For example,
we distinguish between the naive and permuted random
sampling implementations by writing \texttt{RS} and \texttt{RSP},
respectively.
We refer to the variants of DEANN that use naive and permuted random
sampling as a subroutine by \texttt{DEANN} and \texttt{DEANNP}, respectively.
We evaluate our implementation against the \texttt{HBE} implementation of
\cite{SiminelakisRBCL:2019}, 
and the standard implementation
provided by scikit-learn~\citep{scikit-learn}.

The variant of \texttt{HBE} considered is called
\texttt{AdaptiveHBE} in the code of~\cite{SiminelakisRBCL:2019}, and
uses the HBS procedure~\cite[Algorithm~4]{SiminelakisRBCL:2019} for
subsampling the data and the Adaptive Mean Relaxation (AMR)
procedure~\cite[Algorithm~2]{SiminelakisRBCL:2019} for early
termination of queries. For completeness, we also evaluate the
\texttt{AdaptiveRS} variant of random sampling provided
by~\cite{SiminelakisRBCL:2019} that uses AMR with the RS
estimator, and denote it by \texttt{RSA}.
To our understanding, these are the particular varieties
evaluated in~\cite{SiminelakisRBCL:2019}. 
We instrumented their code to 
produce the output necessary in post-processing; the full version of 
their code used for this paper is accessible through the
\texttt{deann-experiments} repository.


We include the \texttt{KernelDensity} %
from scikit-learn~\citep{scikit-learn} as a baseline 
since scikit-learn is widely used in
practical data science applications. This implementation
uses $k$-d trees or ball trees with an optional error tolerance
parameter for accelerating KDE evaluations. We denote the two
different choices for data structure by \texttt{SKKD} and
\texttt{SKBT}, respectively.

We use FAISS~\citep{JohnsonDJ:2017} as the ANN implementation with our
estimator algorithms. 
In particular, we use their \emph{inverted file} index which runs $k$-means on the dataset. From the centroids of $k$-means, it builds a linear-space data structure in which each dataset point is assigned to its closest centroid. When answering a query, it inspects all points 
associated with the $n_q$ closest centroids to the query. 
Both $k$ and $n_q$ are user-defined parameters that are provided to the implementation.
Although FAISS supports extensive parallelism
with GPUs, we limit ourselves to the single-threaded CPU
version. This is because our implementation is entirely
single-threaded to make it comparable with pre-existing
single-threaded implementations; we also disabled multithreading in
MKL.

In the appendices, we provide additional evaluation results that
include (i) further considerations on the robustness of parameter choices
in Appendices~\ref{app:fullresults} and~\ref{app:fixedparameter}, and
(ii) experiments using the Gaussian kernel (including
ASKIT by~\cite{MarchXB:2015} as a competitor) in Appendix~\ref{app:gaussian}.
The trends observed in the main text translate well into these settings.

\begin{table}[t]
  \caption{Implementations Used in the Experiments.
    }
      \label{tbl:implementations}
      \centering
      \begin{tabular}{@{}l@{\,}p{0.35\linewidth}@{\,}p{0.5\linewidth}@{}}
        Name & Description                & Reference \\\hline
        \texttt{Naive} & Exact using \texttt{GEMM}  &   Section~\ref{sect:implementation} \\
        \texttt{RS} & Naive RS & Lemma~\ref{lem:rs} \\
        \texttt{RSP} & Permuted RS & Section~\ref{sect:implementation} \\
        \texttt{DEANN} & DEANN with \texttt{RS} & Section~\ref{sect:implementation} \\
        \texttt{DEANNP} & DEANN with \texttt{RSP} & Section~\ref{sect:implementation} \\
        \texttt{HBE} & HBE estimator & \cite{SiminelakisRBCL:2019} \\
        \texttt{RSA} & Adaptive RS &
                                                  \cite{SiminelakisRBCL:2019} \\
        \texttt{SKKD} & sklearn $k$-d-tree & \cite{scikit-learn} \\
        \texttt{SKBT} & sklearn balltree & \cite{scikit-learn} \\
      \end{tabular}
      \caption{Description of the Datasets.
      }
      \label{tbl:datasets}
      \centering
      \begin{tabular}{@{}l@{\,}r@{\,}r@{\,\,}l@{}}
        Dataset & $n$     & $d$ & Reference \\\hline
        \textsc{ALOI}    &   108,000 & 128 & \cite{GeusebroekBS:2005} \\
        \textsc{Census}  & 2,458,285 & 68  & \cite{Census} \\
        \textsc{Covtype} &   581,012 & 54  & \cite{BlackardD:1999} \\
        \textsc{GloVe}   & 1,193,514 & 100 & \cite{PenningtonSM:2014} \\
        \textsc{last.fm} &   292,385 & 65  & \cite{Celma:2010} \\
        \textsc{MNIST}   &    60,000 & 784 & \cite{LecunBBH:1998} \\
        \textsc{MSD}     &   515,345 & 90  & \cite{Bertin-MahieuxEWL:2011} \\
        \textsc{Shuttle} &    58,000 & 9   & \cite{Shuttle} \\
        \textsc{SVHN}    &   531,131 & 3072 & \cite{NetzerWCBWN:2011} \\
      \end{tabular}
  \end{table}

\textbf{Datasets.}
The datasets considered are presented in
Table~\ref{tbl:datasets}. The names of datasets are written in small
caps.
The choice of datasets includes the ones that
were used in previous works~\citep{SiminelakisRBCL:2019,BackursIW:2019}
for the sake of reproducibility of results, and also present variation
in the quality of data, the size of the dataset, and the number of
dimensions.
In all cases, we split the datasets in three disjoint subsets: a
validation set of 500 vectors, a test set of 500 vectors, and a
training set consisting of the remainder of the data. The training set
is used as the set~$X$ against which the KDE values are computed. The
validation and the test set are used as queries. 

\textbf{Bandwidth selection.} Following the approach in~\cite{BackursIW:2019}, 
we chose four \emph{target KDE values} $10^{-2}$, $10^{-3}$,
$10^{-4}$, and $10^{-5}$ and applied binary search on the validation
set to
find a bandwidth parameter $h$ such that the \emph{median} exact KDE value
of the validation set vectors is within a \emph{relative error}~\footnote{For
an individual query vector $y$, let the estimated KDE be $Z$ and the
correct KDE be $\mu$. We then say that the relative error is
$|Z-\mu|/\mu$. For a query set $Q=\{q_1,q_2,\ldots,q_m\}$ such that
the estimated KDE for the query vector $q_j$ is $Z_j$ and the correct
KDE is $\mu_j$, we say that the average relative error is
$\frac{1}{m}\sum_{j=1}^m |Z_j-\mu_j|/\mu_j$.} of 0.01 from
the target value.
The reason for this choice of multiple bandwidth values is that the
KDE values are very sensitive to a right choice of bandwidth; as the
bandwidth serves as a scaling factor to distances, a very
large bandwidth will make the distances meaningless and it does not
matter which points we look at, whereas a very small bandwidth 
together with the exponential decay of the kernel as a function of
distance means that the nearest neighbors completely determine the KDE
values. By trying different bandwidths, we explore the
intermediate region where both far-away points and nearby points
contribute to the typical density values.
For brevity, we will sometimes refer
to the target value by the letter $\mu$ in the remainder of this section.

\textbf{Experimental pipeline.}
We evaluate the validation set using the exponential 
kernel on
different algorithms and with different parameter values. The
supplementary material includes additional experiments with the
Gaussian kernel. The
parameters were chosen by a grid search over pre-selected parameter ranges;
see the supplemental code for detailed hyperparameter ranges. We exclude the parameter choices that exceed relative error 0.1, and then choose the fastest set of parameters with
respect to average query time.



The best choice of parameters is used to evaluate the test set, on
which we report the relative error, average query time, and the number
of samples looked at, as an average of five independent repetitions.
For \texttt{HBE}, we treat the relative
approximation error~$\varepsilon$ and the minimum KDE value~$\tau$ as
free parameters to be optimized. For the scikit-learn-based
implementations \texttt{SKKD} and \texttt{SKBT}, the parameters are
relative tolerance~$t_r$ which controls which subtrees the
implementation disregards, and the leaf size~$\ell$ of the evaluation
tree, where the implementation falls back to brute force. For
DEANN, the parameters are the number of nearest neighbors~$k$, 
the number of random samples to consider~$m$, the number of
clusters FAISS constructs~$n_\ell$, and the number of clusters FAISS
queries~$n_q$. 


\textbf{Machine details.}
The experiments were run on a shared computer with two 14-core Intel
Xeon E5-2690 v4 CPUs, amounting to 28 physical CPU cores, running at
2.6~GHz, 512~GiB RAM, and using Ubuntu 16.04 LTS. The code was
compiled with CLang~8.0.0, against Intel MKL version~2020.2, and the
experiments were run using CPython~3.8.5, NumPy~1.19.2,
scikit-learn~0.23.2, and FAISS version~1.7.0. The Python environment,
inlcuding MKL and FAISS, were managed through Anaconda~2020.11. A
small amount of other load was present on the computer.



\textbf{Results on validation set.}
Computing the KDE value with different methods on the validation set
provided the following insights: For target KDE values of $10^{-2}$
and $10^{-3}$, DEANN will usually fall back to random sampling which
provides faster query times.  For smaller KDE values, the best query
times were achieved by combining the contribution of the nearest
neighbors and random sampling.  Notable exceptions were
\textsc{last.fm} where using $k$ nearest neighbors pays off even for
large KDE values, and \textsc{GloVe} and \textsc{SVHN}, where random
sampling was the best choice for all $\mu$.

Table~\ref{tbl:parametres} lists the parameters that achieved the best
query time with respect to the validation set at relative error below
0.1 for a subset of datasets.
For lack of space, only the parameters for \texttt{RSP},
\texttt{DEANNP}, \texttt{HBE}, and \texttt{SKKD} are reported; the
parameters for other algorithms are very similar.
The subset was chosen to represent three different cases: a mixed case
(\textsc{ALOI}) where \texttt{DEANNP} performs the best for some
bandwidth choices and is on par with \texttt{RSP} for others, a case
that favors \texttt{DEANNP} (\textsc{last.fm}), and a case where
\texttt{RSP} performs the best (\textsc{SVHN}) and \texttt{DEANNP}
essentially falls back to random sampling. The full set of parameters
is reported in Appendix~\ref{app:fullresults}.

Table~\ref{tbl:recalls} shows the average recall rates for FAISS at
the choice of parameter that provided the best results. The subset of
results is different from Table~\ref{tbl:parametres} to highlight the
extrema. 
The
average fraction of true neighbors returned ranged from 0.23
(\textsc{ALOI}, $k=400$, $\mu=10^{-3}$) to 0.98 (\textsc{Shuttle},
$k=50$, $\mu=10^{-5}$) with a wide range of different values attained
between these extrema.
The full set of results together with an extended discussion is presented in 
Appendix~\ref{app:fullresults}.


\begin{table*}[h!!]
  \caption{The Best Choice Of Parameters Achieving Less Than 0.1
    Relative Error For A Subset Of Dataset/Target~$\mu$ Choices.}
  \label{tbl:parametres}
  \centering
\begin{tabular}{@{}l@{\enspace}l@{\enspace}r@{\enspace}r@{\enspace}r@{\enspace}r@{\enspace}r@{\enspace}r@{\enspace}r@{\enspace}r@{\enspace}r@{\enspace}r@{}}
     &  &  & \texttt{RSP} & \multicolumn{4}{c}{\texttt{DEANNP}} & \multicolumn{2}{c}{\texttt{HBE}} & \multicolumn{2}{c}{\texttt{SKKD}}\\
    \cmidrule(lr){4-4} \cmidrule(lr){5-8} \cmidrule(lr){9-10} \cmidrule(lr){11-12}
    Dataset & Target $\mu$ & $h$ & $m$ & $k$ & $m$ & $n_\ell$ & $n_q$ & $\epsilon$ & $\tau$ & $\ell$ & $t_r$\\\hline
\textsc{ALOI} &  0.01 & 3.3366 & 230 & 0 & 170 & 512 & 1 & 1.1 & 0.001 & 40 & 0.2\\
\textsc{ALOI} &  0.001 & 2.0346 & 1800 & 0 & 2100 & 512 & 1 & 0.6 & 0.0001 & 90 & 0.2\\
\textsc{ALOI} &  0.0001 & 1.3300 & 29000 & 170 & 500 & 1024 & 5 & \textit{n/a} & \textit{n/a} & 80 & 0.2\\
\textsc{ALOI} &  0.00001 & 0.8648 & 78000 & 120 & 430 & 1024 & 5 & \textit{n/a} & \textit{n/a} & 90 & 0.2\\
\textsc{last.fm} & 0.01 & 0.0041 & 75000 & 60 & 350 & 1024 & 1 & \textit{n/a} & \textit{n/a} & 10 & 0.2\\
\textsc{last.fm} & 0.001 & 0.0026 & 85000 & 70 & 800 & 512 & 1 & \textit{n/a} & \textit{n/a} & 10 & 0.15\\
\textsc{last.fm} & 0.0001 & 0.0019 & 160000 & 50 & 350 & 2048 & 5 & \textit{n/a} & \textit{n/a} & 20 & 0.1\\
\textsc{last.fm} & 0.00001 & 0.0015 & 200000 & 80 & 450 & 2048 & 5 & \textit{n/a} & \textit{n/a} & 100 & 0.15\\
\textsc{SVHN} & 0.01 & 632.7492 & 150 & 0 & 120 & 512 & 1 & 1.2 & 0.0001 & 70 & 0.2\\
\textsc{SVHN} & 0.001 & 391.3900 & 400 & 0 & 350 & 512 & 1 & \textit{n/a} & \textit{n/a} & 60 & 0.2\\
\textsc{SVHN} & 0.0001 & 277.1836 & 900 & 0 & 800 & 512 & 1 & \textit{n/a} & \textit{n/a} & 60 & 0.2\\
\textsc{SVHN} & 0.00001 & 211.4066 & 1900 & 0 & 2000 & 512 & 1 & \textit{n/a} & \textit{n/a} & 60 & 0.2\\
  \end{tabular}
\end{table*}
\begin{table*}[h!!]
  \caption{ Average Recall Rates For The Approximate
    Nearest Neighbors Returned By FAISS. }
  \label{tbl:recalls}
  \centering
  \begin{tabular}{llrrrrrrrrrr}
     &  & \multicolumn{5}{c}{\texttt{DEANN}} & \multicolumn{5}{c}{\texttt{DEANNP}}\\
    \cmidrule(lr){3-7} \cmidrule(lr){8-12}
    Dataset & Target $\mu$ & $k$ & $m$ & $n_\ell$ & $n_q$ & $R$ & $k$ & $m$ & $n_\ell$ & $n_q$ & $R$\\\hline
\textsc{ALOI} &  0.001 & 170 & 400 & 512 & 1 & 0.23 & \textit{n/a} & \textit{n/a} & \textit{n/a} & \textit{n/a} & \textit{n/a}\\
\textsc{ALOI} &  0.0001 & 200 & 430 & 1024 & 5 & 0.72 & 170 & 500 & 1024 & 5 & 0.74\\
\textsc{last.fm} & 0.01 & 50 & 400 & 2048 & 1 & 0.24 & 60 & 350 & 1024 & 1 & 0.88\\
\textsc{last.fm} & 0.001 & 70 & 200 & 2048 & 5 & 0.86 & 70 & 800 & 512 & 1 & 0.97\\
\textsc{MSD} & 0.00001 & 210 & 1800 & 4096 & 10 & 0.45 & 210 & 2100 & 2048 & 5 & 0.43\\
\textsc{Shuttle} & 0.00001 & 50 & 0 & 512 & 5 & 0.98 & 50 & 0 & 512 & 5 & 0.98\\
  \end{tabular}
\end{table*}
\begin{table*}[h!!]
  \caption{Results of Evaluating the Different Algorithms Against the
    Test Set in Milliseconds / Query.}
  \label{tbl:results}
  \centering
  \begin{tabular}{@{}l@{\,\,}l@{\,\,}r@{\,\,}r@{\,\,}r@{\,\,}r@{\,\,}r@{\,\,}r@{\,\,}r@{\,\,}r@{\,\,}r@{}}
    Dataset & Target $\mu$ & \texttt{Naive} & \texttt{RS} & \texttt{RSP} & \texttt{DEANN} & \texttt{DEANNP} & \texttt{HBE} & \texttt{RSA} & \texttt{SKKD} & \texttt{SKBT}\\\hline
    \textsc{ALOI} & 0.01 & 1.051 & 0.050 & 0.022 & 0.025 & \textbf{0.016} & 0.623 & 0.808 & 58.498 & 48.353\\
    \textsc{ALOI} & 0.001 & 1.058 & 0.326 & \textbf{0.105} & 0.211 & 0.148 & 12.192 & 41.411 & 59.353 & 47.644\\
    \textsc{ALOI} & 0.0001 & 1.055 & 6.477 & 1.698 & 0.270 & \textbf{0.197} & \textit{n/a} & \textit{n/a} & 55.786 & 47.916\\
    \textsc{ALOI} & 0.00001 & 1.057 & 21.781 & 4.548 & 0.219 & \textbf{0.182} & \textit{n/a} & \textit{n/a} & 47.930 & 49.698\\
    \textsc{last.fm} & 0.01 & 2.593 & 12.704 & 2.145 & 0.227 & \textbf{0.181} & \textit{n/a} & \textit{n/a} & 104.039 & 94.147\\
    \textsc{last.fm} & 0.001 & 2.621 & 17.183 & 2.455 & 0.277 & \textbf{0.222} & \textit{n/a} & \textit{n/a} & 99.893 & 86.006\\
    \textsc{last.fm} & 0.0001 & 2.753 & 48.630 & 4.699 & 0.294 & \textbf{0.247} & \textit{n/a} & \textit{n/a} & 98.582 & 83.999\\
    \textsc{last.fm} & 0.00001 & 2.923 & 40.249 & 5.993 & 0.330 & \textbf{0.263} & \textit{n/a} & \textit{n/a} & 85.621 & 83.367\\
    \textsc{SVHN} & 0.01 & 42.094 & 0.290 & \textbf{0.189} & 0.255 & 0.448 & 11.830 & 56.613 & 3447.218 & 2521.555\\
    \textsc{SVHN} & 0.001 & 42.172 & 0.747 & \textbf{0.500} & 0.698 & 0.938 & \textit{n/a} & 56.270 & 3471.669 & 2509.883\\
    \textsc{SVHN} & 0.0001 & 42.260 & 2.207 & \textbf{1.096} & 1.503 & 1.459 & \textit{n/a} & 83210.996 & 3455.433 & 2495.796\\
    \textsc{SVHN} & 0.00001 & 41.748 & 3.743 & \textbf{2.262} & 3.758 & 2.852 & \textit{n/a} & \textit{n/a} & 3496.380 & 2445.718\\
  \end{tabular}
\end{table*}

\textbf{Results on test set.} 
A subset of the main results are reported in
Table~\ref{tbl:results}, the same subset as in Table~\ref{tbl:parametres}. The full set of results is presented in
Appendix~\ref{app:fullresults}.
The table
lists the average query time per query vector in milliseconds, ordered
by the dataset and the target~$\mu$.

\begin{table*}[h!!]
  \begin{center}
    \caption{A Subset Of Preprocessing Times In Seconds.}
    \label{tbl:buildtimes}
\begin{tabular}{@{}l@{\,}l@{\,}r@{\,}r@{\,}r@{\,}r@{\,}r@{\,}r@{\,}r@{\,}r@{\,}r@{\,}r@{}}
Dataset & Target $\mu$ & \texttt{Naive} & \texttt{RS} & \texttt{RSP} & \texttt{DEANN} & \texttt{DEANNP} & \texttt{HBE} & \texttt{RSA} & \texttt{SKKD} & \texttt{SKBT} & \texttt{ASKIT}\\\hline\textsc{ALOI} & 0.01 & 0.006 & \textbf{0.000} & 0.055 & 0.377 & 8.775 & 22.285 & 0.000 & 4.929 & 5.155 & 21.455\\
\textsc{ALOI} & 0.00001 & \textbf{0.006} & \textit{n/a} & \textit{n/a} & 0.154 & 0.146 & \textit{n/a} & \textit{n/a} & 5.782 & 4.949 & 6.372\\
\textsc{Census} & 0.01 & 0.081 & \textbf{0.000} & 0.945 & 3.568 & 14.269 & 101.727 & 0.000 & 25573.250 & 22917.678 & \textit{n/a}\\
\textsc{Covtype} & 0.01 & 0.017 & \textbf{0.000} & 0.179 & 26.056 & 0.336 & 11.008 & 0.000 & 5.644 & 4.098 & 572.824\\
\textsc{Covtype} & 0.00001 & \textbf{0.016} & \textit{n/a} & \textit{n/a} & 0.593 & 0.621 & \textit{n/a} & \textit{n/a} & 5.026 & 4.010 & 75.267\\
\textsc{MNIST} & 0.01 & 0.017 & \textbf{0.000} & 0.159 & 1.700 & 0.813 & 100.323 & 0.000 & 12.369 & 11.154 & 14.053\\
\textsc{MNIST} & 0.00001 & 0.016 & \textbf{0.000} & 0.155 & 0.447 & 0.443 & \textit{n/a} & \textit{n/a} & 12.461 & 11.022 & 4.397\\
\textsc{MSD} & 0.01 & 0.020 & \textbf{0.000} & 0.223 & 9.319 & 9.395 & \textit{n/a} & \textit{n/a} & 12.378 & 10.359 & 144.805\\
\textsc{MSD} & 0.00001 & 0.019 & \textbf{0.000} & 0.224 & 0.446 & 0.460 & \textit{n/a} & \textit{n/a} & 11.614 & 10.028 & 144.940\\
\textsc{Shuttle} & 0.01 & 0.001 & \textbf{0.000} & 0.007 & 0.238 & 0.070 & 2.006 & 0.000 & 0.687 & 0.658 & 0.593\\
\textsc{SVHN} & 0.01 & 0.583 & \textbf{0.000} & 5.590 & 262.613 & 1651.727 & \textit{n/a} & \textit{n/a} & 454.764 & 473.117 & \textit{n/a}\\
\textsc{SVHN} & 0.00001 & 0.772 & \textbf{0.000} & 5.592 & 16.252 & 16.640 & \textit{n/a} & \textit{n/a} & 431.374 & 452.096 & \textit{n/a}\\
\end{tabular}
  \end{center}
\end{table*}

\textbf{Performance discussion.} 
In almost all cases, either \texttt{DEANNP}
or \texttt{RSP} was the fastest implementation, as indicated by bold
typeface (with the exception of \texttt{Covtype} at $\mu=10^{-5}$). 
In cases where \texttt{RSP} was the fastest algorithm, \texttt{DEANNP} does not
lose significantly because it falls back to random sampling; the
runtimes are very similar in those cases, apart from the slight
overhead of the more complex implementation.  
\texttt{RSP} provides speedups of a
factor of 2--10 for most workloads compared to 
\texttt{RS}.  In the small bandwidth regime where the ANN
contribution helps most, \texttt{RSP} is often slower by a factor of
10 or more than DEANN.  Contrasting our implementations to
competitors, we can compare to \texttt{HBE} consistently only for
target KDE value of $0.01$ and, usually, $0.001$.  In this setting,
performance is closest on \textsc{Covtype} with target KDE value
$0.001$ (\texttt{HBE} is roughly 2.5 times slower), but we observe a
speedup of 1-2 orders of magnitudes in many other settings, while being robust
even for very small target values.  The tree-based methods of
scikit-learn did not perform very well in our experiments. This is
largely due to the fact that the datasets are high-dimensional and the
space-partitioning methods tend to scale exponentially with
dimension. Indeed, scikit-learn performed
adequately in comparison to our \texttt{Naive} implementation only on 
\textsc{Shuttle}, the dataset with smallest~$d$,
and---surprisingly---\textsc{Covtype} with smallest target KDE.


\textbf{Task difficulty.} 
Some results are missing: for
\textsc{Shuttle} at target value of 0.00001, \texttt{RS} would have
required more samples than there are datapoints to achieve the desired
relative error. Several \texttt{HBE} and \texttt{RSA} results are 
missing due to our experimental setup, as a very
small value of $\tau$ ought to have been used to achieve a
sufficiently small relative error, as we included \emph{all} query
vectors in our experiments, even those with extremely small KDE
values. However, the implementation did not permit use of sufficiently
small $\tau$ values because either the runtimes grew excessively
large 
or the size of the data structure grew
so large that we ran out of RAM on our computer.  For finished runs, our results are in line with the results
in~\cite{SiminelakisRBCL:2019}.

\textbf{Preprocessing times.}
Our algorithm has no intrinsic data structure to
construct; the preprocessing time is determined by the 
ANN algorithm, and 
the time it takes
to create a permuted copy of the data for permuted sampling.
Table~\ref{tbl:buildtimes} shows a subset of preprocessing times that
have been collected when evaluating a similar set of experiments
against the Gaussian kernel. As such, this table also includes
\texttt{ASKIT} for comparison. The data points have been cherry-picked
to reflect various extreme cases, including the extreme case of over 7
hours for scikit-learn when constructing the tree for the
\textsc{Census} dataset. For \texttt{DEANN} and
\texttt{DEANNP}, the wide variation in the construction times is
determined by the choice of the FAISS parameters which provide a
tradeoff between construction and query time. Full results and discussion are presented in Appendix~\ref{app:preprocessingtimes}.

\textbf{Robustness considerations.}
Table~\ref{tbl:relerrs} shows the relative errors achieved when
evaluating the query set against the test set with the best
parameters, 
showing that DEANN
generalizes nicely: our
experiments show that this choice translated to a low average relative error
also in the test set, as the greatest individual observed value was on
\textsc{last.fm} at $\mu=0.01$ where the relative
error reached 0.114. The full set of results is presented in Appendix~\ref{app:fullresults}.

In Appendix~\ref{app:fixedparameter}, we discuss robust parameter selection for DEANN.
Instead of an expensive grid search, we report on experiments using one fixed set of parameters for different datasets and different target values. This single fixed parameter setting  provided
low relative error and good performance in most cases.

\subsubsection*{Acknowledgements}
  We thank Kexin Rong and Paris Siminelakis for helpful discussion
  regarding their code. 
  Matti Karppa and Rasmus Pagh are part of BARC, supported by VILLUM
  Foundation grant 16582.

\bibliographystyle{plainnat}
\bibliography{refs}


\clearpage
\appendix

\thispagestyle{empty}




\section{Asymptotic notation}

We use the asymptotic notation as defined by~\citet[Section~1.2.11]{Knuth:1997}. For $f,g : \mathbb N \to
\mathbb N$, we write $f(n) = O(g(n))$ if there exist positive
constants $n_0$ and $M$ such that $f(n) \leq M g(n)$ for all $n\geq
n_0$.
We also write $f(n) = \Omega(g(n))$ if there exist positive constants
$n_0$ and $L$ such that $f(n) \geq Lg(n)$ for all $n\geq n_0$. We
write $f(n) = \Theta(g(n))$ if $f(n) = O(g(n))$ and
$f(n)=\Omega(g(n))$.

Finally, for real-valued functions $f:\mathbb
N \to\mathbb R$, we write $f(n)=o(1)$ if $\lim_{n\to\infty} |f(n)| = 0$.

\section{Related work and historical perspectives on KDE}
\label{app:relatedwork}
This section provides an extended discussion on the related work, and
especially the historical discussion on earlier work.

Early developments in nontrivial computation of the KDE in low
dimensions include methods based on the Fast Fourier Transform, such
as~\cite{Silverman:1982} and~\cite{JonesL:1983,JonesL:1984} for the univariate
KDE, the Fast Multipole Method~\citep{GreengardR:1987}, and the Fast
Gauss Transform~\citep{GreengardS:1991}. This line of work has been
followed by a line of \emph{dual-tree} data
structures~\citep{GrayM:2000,GrayM:2003,LeeGM:2005,RamLMG:2009}. However,
these methods suffer from the curse of dimensionality. An attempt to
mitigate this effect in higher dimensions with \emph{subspace trees},
applying dimension reduction technologies such as Principal Component
Analysis (PCA) together with random sampling, was presented
by~\cite{LeeG:2008}, but even this method requires
$\Theta(\frac{1}{\epsilon^{2}})$ samples.

\cite{MorariuSRDD:2008} presented an algorithm based on tree data structures
and \emph{Improved Fast Gauss Transform} along with an
implementation called FigTree. \cite{MarchXB:2015} presented
\emph{ASKIT},
a tree-based space-partitioning method
based on \emph{treecodes} that can make efficient use of the low-rank
block structure of the matrix of pairwise kernel evaluations of the
data points even in high dimensions when such structure exists. They
also provided an implementation of ASKIT as free
software.\footnote{Available at~\url{https://padas.oden.utexas.edu/libaskit/}.}

Another line of research is focused on finding subsamples of the data
set that preserve the KDE values with arbitrary queries up to an
approximation factor, called \emph{$\epsilon$-samples} or
\emph{coresets}~\citep{Phillips:2013,ZhengJPL:2013,PhillipsT:2020}. However,
despite offering better approximation guarantees, asymptotically
coresets require a similar $\Theta(\frac{1}{\epsilon^2})$ number of
samples as simple Random Sampling.

There are also other approaches to subsampling the dataset, such as
Kernel Herding~\citep{ChenWS:2010}, and also
HBS~\citep{SiminelakisRBCL:2019} and the independent subsampling of
hash tables in~\citep{BackursIW:2019}.

\cite{CharikarS:2017} applied importance
sampling to model the KDE values through the collision probability of the
Euclidean Locality Sensitive Hashing (ELSH) scheme of \cite{DatarIIM:2004} to create a data
structure called \emph{Hashing Based Estimators (HBE)}. This data
structure presented first asymptotical improvement with theoretical
guarantees over simple RS in high dimensions. In
particular, HBE improves upon RS in the regime where a large amount of
the contribution comes from a small number of dataset points close to the
query point.

The theoretical nature of the results of~\cite{CharikarS:2017} were
made more practical by~\cite{SiminelakisRBCL:2019} who presented a data structure using
\emph{Hashing Based Sketches (HBS)}. Roughly, the idea of their KDE
estimation algorithm is to first subsample the dataset into a number
of sketches using ELSH and weighted sampling, and then construct the
HBE estimators from these subsampled datasets by reapplying ELSH, thus
``rehashing'' the dataset. They also presented an adaptive variant of
the algorithm whereby the ELSH data structures are constructed at a
number of levels, each containing an increasing number of hash tables,
corresponding to a lower bound of the estimated KDE value. Assuming a
sufficiently large KDE estimate can be made, the query terminates
early, but otherwise continues to a larger number of hash tables. They
also provide an implementation of their algorithm as free
software\footnote{Available at
  \url{https://github.com/kexinrong/rehashing}.} that can be used for
comparison. They showed empirically in~\citep{SiminelakisRBCL:2019}
that their HBE implementation is competitive with ASKIT and in some
performs an order of magnitude better than ASKIT.

Another improvement on the HBE scheme was presented by~\cite{BackursIW:2019} who improved on the space usage of
the algorithm by observing that HBE tends to store the same points in
several hash tables. They showed that, for each hash table, it suffices to
include each point hashed to the table with a certain probability to
guarantee that the point is stored in approximately one hash table,
and the approximation guarantees of HBE are still sufficiently
preserved. They provided a Python implementation\footnote{Available at
\url{https://github.com/talwagner/efficient_kde/}.} and used the
number of kernel function evaluations as a proxy for the runtime in
their experiments.

In recent work, \cite{CharikarKNS20} provided asymptotic improvements in running time and space complexity by using data-dependent LSH.

\section{Proof of Lemma~\ref{lem:rs}}
\label{app:rsproof}

In this appendix, we present the proof of Lemma~\ref{lem:rs}. The
proof is presented for completeness only without any claim to
originality. While the result is well known, it seems to be difficult
to find a useful version of the proof in the literature.

We need the following form of the Chernoff bound in the proof.
\begin{lemma}[{Chernoff~\cite[Theorem~1.1, pp.~6--7]{DevdattP:2009}}]
  Let $X = \sum_{i=1}^n X_i$ where $X_i \in [0,1]$ are independently distributed
  random variables. Then, for $\epsilon > 0$,
  \begin{align}
    \Pr[X > (1+\epsilon)\E[X]] & \leq
                                 \exp\left(-\frac{\epsilon^2}{3}\E[X]\right)
    \, , \label{eq:chernoff1}\\
    \Pr[X < (1-\epsilon)\E[X]] & \leq
                                 \exp\left(-\frac{\epsilon^2}{2}\E[X]\right)
                                 \, . \label{eq:chernoff2}
  \end{align}
  \label{lem:chernoff}
\end{lemma}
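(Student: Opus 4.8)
The plan is to use the standard Chernoff--Cram\'er exponential-moment method, proving the two tails separately by the same template. For the upper tail, fix $t > 0$ and note that $X > (1+\epsilon)\E[X]$ holds exactly when $e^{tX} > e^{t(1+\epsilon)\E[X]}$; since $e^{tX} \ge 0$, Markov's inequality gives
\[
  \Pr[X > (1+\epsilon)\E[X]] \le e^{-t(1+\epsilon)\E[X]}\,\E[e^{tX}].
\]
The independence of the $X_i$ lets the moment generating function factorize, $\E[e^{tX}] = \prod_{i=1}^{n}\E[e^{tX_i}]$, so the whole estimate reduces to bounding a single factor and then optimizing over $t$.

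For the per-variable bound I would exploit $X_i \in [0,1]$: convexity of $x \mapsto e^{tx}$ on $[0,1]$ yields the pointwise inequality $e^{tX_i} \le 1 + (e^{t}-1)X_i$, whence $\E[e^{tX_i}] \le 1 + (e^{t}-1)\E[X_i] \le \exp\!\big((e^{t}-1)\E[X_i]\big)$, using $1 + z \le e^{z}$. Writing $\mu = \E[X] = \sum_i \E[X_i]$ and multiplying over $i$ gives $\E[e^{tX}] \le \exp\!\big((e^{t}-1)\mu\big)$, and therefore $\Pr[X > (1+\epsilon)\mu] \le \exp\!\big(\mu[(e^{t}-1) - t(1+\epsilon)]\big)$.

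Next I would optimize the exponent over $t > 0$. Differentiating $(e^{t}-1) - t(1+\epsilon)$ in $t$ and setting it to zero gives $e^{t} = 1+\epsilon$, i.e. $t = \ln(1+\epsilon)$, which produces the classical bound $\Pr[X > (1+\epsilon)\mu] \le \exp\!\big(-\mu[(1+\epsilon)\ln(1+\epsilon)-\epsilon]\big)$. The lower tail is handled symmetrically by applying Markov to $e^{-sX}$ for $s > 0$, factorizing again, and optimizing at $s = -\ln(1-\epsilon)$, which yields $\Pr[X < (1-\epsilon)\mu] \le \exp\!\big(-\mu[(1-\epsilon)\ln(1-\epsilon)+\epsilon]\big)$.

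The main obstacle is the final simplification to the clean forms \eqref{eq:chernoff1} and \eqref{eq:chernoff2}, which rests on the two analytic inequalities $(1+\epsilon)\ln(1+\epsilon) - \epsilon \ge \epsilon^{2}/3$ and $(1-\epsilon)\ln(1-\epsilon) + \epsilon \ge \epsilon^{2}/2$. I would prove each by defining the difference $g(\epsilon)$ of the two sides, checking $g(0) = g'(0) = 0$, and showing $g'' \ge 0$ on the relevant interval; for the lower tail a power-series comparison of $\ln(1-\epsilon)$ works cleanly and holds for all $0 < \epsilon < 1$. I would also flag the caveat that the $\epsilon^{2}/3$ upper-tail form is only valid in the regime $0 < \epsilon \le 1$, which is exactly the range relevant to a $(1+\epsilon)$-approximation; this is the one genuinely delicate point, since the exponent $(1+\epsilon)\ln(1+\epsilon)-\epsilon$ grows only like $\epsilon\ln\epsilon$ for large $\epsilon$.
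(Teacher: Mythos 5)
Your proposal is correct in substance, but note the point of comparison: the paper does not prove this lemma at all---it imports it verbatim, with the citation to Dubhashi and Panconesi, and uses it as a black box in the proof of Lemma~\ref{lem:rs}. So you are supplying a proof where the paper offers only a reference. Your route is the standard one and it checks out: convexity gives $e^{tX_i}\le 1+(e^t-1)X_i$ pointwise on $[0,1]$, hence $\E[e^{tX}]\le\exp\left((e^t-1)\mu\right)$ by independence and $1+z\le e^z$; optimizing at $t=\ln(1+\epsilon)$ and $s=-\ln(1-\epsilon)$ yields the exponents $(1+\epsilon)\ln(1+\epsilon)-\epsilon$ and $(1-\epsilon)\ln(1-\epsilon)+\epsilon$, and the power-series comparison for the lower tail is clean: the coefficient of $\epsilon^k$ in $(1-\epsilon)\ln(1-\epsilon)+\epsilon$ is $\frac{1}{k(k-1)}>0$ for $k\ge 2$, giving \eqref{eq:chernoff2} for all $0<\epsilon<1$ (and trivially for $\epsilon\ge 1$, since $X\ge 0$). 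Your caveat on the upper tail is well taken and is in fact sharper than the statement as the paper quotes it: \eqref{eq:chernoff1} with the $\epsilon^2/3$ exponent is false for large $\epsilon$, since $(1+\epsilon)\ln(1+\epsilon)-\epsilon$ grows only like $\epsilon\ln\epsilon$; the source theorem restricts to $0<\epsilon<1$, and the paper's ``for $\epsilon>0$'' is a harmless overstatement given that its application only needs small $\epsilon$.

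One concrete repair to your sketch: for the upper-tail inequality $(1+\epsilon)\ln(1+\epsilon)-\epsilon\ge\epsilon^2/3$, the plan ``check $g(0)=g'(0)=0$ and show $g''\ge 0$'' does not go through on all of $(0,1]$, because $g''(\epsilon)=\frac{1}{1+\epsilon}-\frac{2}{3}$ is negative for $\epsilon>\frac12$. Instead, show directly that $g'(\epsilon)=\ln(1+\epsilon)-\frac{2\epsilon}{3}$ is positive on $(0,1]$: it increases on $[0,\frac12]$ from $g'(0)=0$, then decreases, but $g'(1)=\ln 2-\frac23>0$, so $g'>0$ throughout and $g$ is increasing from $g(0)=0$. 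With that patch (or with the sharper standard bound $(1+\epsilon)\ln(1+\epsilon)-\epsilon\ge\frac{\epsilon^2}{2+2\epsilon/3}$, which exceeds $\epsilon^2/3$ for $\epsilon\le 1$), your proof is complete.
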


We recall Lemma~\ref{lem:rs}. We bound the number of random samples
required using the Chernoff bound with respect to an arbitrary
constant probability $\delta$.

\rslemma*
\begin{proof}
  Fix constant $0 < \delta < 1$. Let $X'=(x'_1,x'_2,\ldots,x'_m)$ be
  the random sample such that each $x'_i$ is drawn from $X$
  independently and uniformly distributed at random with
  repetition. We treat each $x'_i$ as a random variable taking values
  from the set $X$ and hold the
  query vector $y$
  arbitrary but fixed.

  For all $i=1,2,\ldots,m$, define 
  $Z_i=K_h(x'_i,y)$ where $K_h : \mathbb R^d\times \mathbb
  R^d \to [0,1]$ is the kernel function; without loss of generality, we may
  assume all $Z_i$ satisfy $0 \leq Z_i \leq 1$ by dividing the value
  of the kernel function with an appropriate constant. Clearly,
  $\E[Z_i] = \frac{1}{n}\sum_{j=1}^n K_h(x_i,y) = \mu$, so each $Z_i$
  is an unbiased estimator for $\mu=\KDE{X}{y}$.

  Letting $Z=\sum_{i=1}^mZ_i$, we get by linearity of expectation that
  $\E[Z] = m\E[Z_i] = m\mu \geq m\tau$. From
  Equation~\eqref{eq:chernoff1}, we get
  \begin{equation}
    \label{eq:chernoffapplication1}
    \Pr[Z > (1+\epsilon)\mu] \leq
    \exp\left(-\frac{\epsilon^2}{3}m\mu\right) \leq
    \exp\left(-\frac{\epsilon^2}{3}m\tau\right) \, .
  \end{equation}
  If we let the probability on the right hand side of
  Equation~\eqref{eq:chernoffapplication1} be less than or equal to the
  constant $\delta$, we get
  \[
    -\frac{\epsilon^2}{3}m\tau \leq \ln \delta \, ,
  \]
  and solving for $m$,
  \begin{equation}
    \label{eq:rsmbound}
    m \geq \frac{3\ln\frac{1}{\delta}}{\epsilon^2\tau} \, ,
  \end{equation}
  and by the same argument, Equation~\eqref{eq:chernoff2} yields the
  same bound on $m$ up to constant, 
  so we can thus conclude that $m=O(\frac{1}{\varepsilon^2\tau})$
  samples suffice to bound the error to the desired range. 
\end{proof}
It should be noted that, although not present in the
statement of Lemma~\ref{lem:rs}, the number of random samples~$m$
depends on the constant~$\delta$ by a factor of $\ln\frac{1}{\delta}$.

Furthermore, Lemma~\ref{lem:rs} is tight up to a constant. To see why,
we must consider a worst-case input that consists of vectors such that
a 
$\tau$-fraction of the dataset has kernel value of 1 and the remainder
are (essentially) 0. The random sample can be modelled as a sum of Bernoulli
variables such that the kernel values are either 0 or 1 with
probability $\tau$, which yields the correct KDE in expectation.

This input has a geometric interpretation, where the query is situated
such with respect to the dataset that a significant fraction (a
$\tau$-fraction) of the dataset essentially coincides with the query
vector (possibly up to a negligible amount of additive noise), and the
remainder of the dataset resides infinitely far (with respect to the
exponential decay of the kernel). This is precisely the regime where
we are looking for a needle in the haystack and nearest neighbors
essentially determine the KDE value, but we need to look at a large fraction of the dataset at
random to be able to find the needle.

We
will show that, with such input, the Chernoff bound is tight up to a constant, which
implies that also the required size of the sample is tight up to a
constant. To show this, we need the following lemma that we have
restated in the notation presented here.

\begin{lemma}[{\citet[Theorem~2.1]{Slud:1977}}]
  \label{lem:slud}
  Let $0\leq \tau \leq \frac{1}{4}$ and $\varepsilon > 0$. Let
  $X=\sum_{i=1}^m X_i$ with $X_i\sim\mathrm{Bernoulli}(\tau)$. Then
  \[
    \Pr[X \geq (1+\varepsilon)m\tau] \geq 1-\Phi\left(
      \frac{\varepsilon \sqrt{m\tau}}{\sqrt{1-\tau}} \right)
    > 1 - \Phi(2\varepsilon\sqrt{m\tau})
    \, ,
  \]
  where $\Phi$ is the standard normal cumulative distribution function.
\end{lemma}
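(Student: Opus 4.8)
The plan is to handle the chain of two inequalities separately: the first $\ge$ is a direct instance of Slud's Theorem~2.1, while the second $>$ is an elementary consequence of the hypothesis $\tau \le \tfrac{1}{4}$.

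First I would specialize Slud's inequality to the parameters at hand. Viewing $X = \sum_{i=1}^m X_i$ as a $\mathrm{Binomial}(m,\tau)$ variable, Slud's bound takes the form $\Pr[X \ge k] \ge 1 - \Phi\!\left(\frac{k - m\tau}{\sqrt{m\tau(1-\tau)}}\right)$ over the admissible range of $k$. Setting $k = (1+\varepsilon)m\tau$ makes the numerator $k - m\tau = \varepsilon m\tau$ and the standard deviation $\sqrt{m\tau(1-\tau)}$, and since $\frac{\varepsilon m\tau}{\sqrt{m\tau(1-\tau)}} = \frac{\varepsilon\sqrt{m\tau}}{\sqrt{1-\tau}}$, this reproduces the first inequality verbatim. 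The substitution must be checked against the hypotheses of Slud's theorem: the condition $k \ge m\tau$ holds because $\varepsilon > 0$, and the remaining restrictions on the range of $k$ together with $0 \le \tau \le \tfrac{1}{4}$ place us in the regime where the one-sided normal lower bound is valid.

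For the second inequality I would use that $\Phi$ is strictly increasing, so $1 - \Phi(a) > 1 - \Phi(b)$ is equivalent to $a < b$. With $a = \frac{\varepsilon\sqrt{m\tau}}{\sqrt{1-\tau}}$ and $b = 2\varepsilon\sqrt{m\tau}$, cancelling the common positive factor $\varepsilon\sqrt{m\tau}$ reduces the claim to $\frac{1}{\sqrt{1-\tau}} < 2$, i.e.\ $1 - \tau > \tfrac{1}{4}$. This follows immediately from $\tau \le \tfrac{1}{4}$; indeed it yields $\frac{1}{\sqrt{1-\tau}} \le \frac{2}{\sqrt 3} < 2$ with room to spare, so the strict inequality is safe.

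The only nontrivial ingredient is Slud's inequality itself, whose proof (a careful comparison of binomial tail sums against the Gaussian tail) I would take as given through the citation. Accordingly, the main thing to be careful about is not a calculation but a bookkeeping step: confirming that the target value $(1+\varepsilon)m\tau$ lies in the parameter range for which Slud's one-sided bound holds, since the inequality can fail outside that regime. Once that is in place, the substitution and the monotonicity argument above complete the proof.
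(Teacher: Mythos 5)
Your proposal is correct and matches the paper's treatment: the paper offers no proof of this lemma at all, presenting it purely as a restatement of Slud's Theorem~2.1 in the paper's notation, which is exactly your first step (substituting $k=(1+\varepsilon)m\tau$ into Slud's bound $\Pr[X\ge k]\ge 1-\Phi\bigl((k-m\tau)/\sqrt{m\tau(1-\tau)}\bigr)$), while the second inequality is the routine monotonicity observation that $\tau\le\frac{1}{4}$ gives $1/\sqrt{1-\tau}\le 2/\sqrt{3}<2$, exactly as you argue. So you have supplied, correctly, the bookkeeping the paper leaves implicit behind the citation.
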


\begin{lemma}
  Lemma~\ref{lem:rs} is tight up to a constant for worst-case input.
\end{lemma}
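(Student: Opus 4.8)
The plan is to exhibit a concrete hard instance realizing the geometric picture described above, and then to lower-bound the required sample size using the anti-concentration estimate of Lemma~\ref{lem:slud}. Concretely, I would fix a query $y$, place $\lceil \tau n\rceil$ dataset points exactly at $y$ (so each has kernel value $1$), and push the remaining points to infinity so that their kernel value is $0$ in the limit. Then $\KDE{X}{y} = \tau$ exactly, matching the hypothesis $\KDE{X}{y}\ge\tau$ with equality. Under uniform sampling with repetition, each sampled point independently contributes kernel value $1$ with probability $\tau$ and $0$ otherwise, so the sum of sampled kernel values is $Z=\sum_{i=1}^m X_i$ with $X_i\sim\mathrm{Bernoulli}(\tau)$, and the RS estimator is $Z/m$ with $\E[Z/m]=\tau=\mu$.

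Next I would isolate the overestimation failure event. The estimator fails to be a $(1+\varepsilon)$-approximation whenever $Z/m>(1+\varepsilon)\mu=(1+\varepsilon)\tau$, i.e.\ whenever $Z>(1+\varepsilon)m\tau$, so the failure probability is at least $\Pr[Z\ge(1+\varepsilon)m\tau]$. Working in the regime of interest where $\tau\le\tfrac14$, Lemma~\ref{lem:slud} applies and gives
\[
  \Pr[Z\ge(1+\varepsilon)m\tau] > 1-\Phi\!\left(2\varepsilon\sqrt{m\tau}\right).
\]

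The final step is to convert the success-probability requirement into a lower bound on $m$. If the algorithm is to succeed with some constant probability $1-\delta$ for a constant $\delta<\tfrac12$, then the failure probability must be at most $\delta$, and combining this with the displayed inequality forces $\Phi(2\varepsilon\sqrt{m\tau})\ge 1-\delta$. Since $\Phi$ is increasing, this is equivalent to $2\varepsilon\sqrt{m\tau}\ge\Phi^{-1}(1-\delta)=:c_\delta$, where $c_\delta>0$ is a constant depending only on $\delta$. Squaring and rearranging yields
\[
  m \ge \frac{c_\delta^2}{4\,\varepsilon^2\tau} = \Omega\!\left(\frac{1}{\varepsilon^2\tau}\right),
\]
which matches the upper bound $m=O(1/(\varepsilon^2\tau))$ of Lemma~\ref{lem:rs} up to a constant factor, establishing tightness.

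The numerical computation here is routine; the real content is supplied by Slud's lower bound on the upper-tail probability, which provides exactly the anti-concentration needed to rule out small sample sizes. The main point requiring care is the modeling step: verifying that the degenerate instance genuinely produces i.i.d.\ $\mathrm{Bernoulli}(\tau)$ samples with $\mu=\tau$ (either by a clean limiting argument for the ``points at infinity,'' or by placing the far points at a large-but-finite distance and absorbing the resulting $o(1)$ contribution into the constants), together with checking that the hypothesis $\tau\le\tfrac14$ of Lemma~\ref{lem:slud} coincides with the small-$\tau$ regime that is the only interesting case for this bound.
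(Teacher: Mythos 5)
Your proposal is correct and takes essentially the same route as the paper: the same degenerate hard instance (a $\tau$-fraction of points coinciding with the query, the rest pushed to infinity) reduces the estimator to a sum of i.i.d.\ $\mathrm{Bernoulli}(\tau)$ variables, and the same anti-concentration bound (Slud's theorem, Lemma~\ref{lem:slud}) lower-bounds the overestimation probability. The only difference is in the final step, and it is cosmetic: the paper further bounds $1-\Phi\left(2\varepsilon\sqrt{m\tau}\right)$ from below by $\exp(-\varepsilon^2 m\tau)/4$ so that the failure probability matches the Chernoff form and then re-solves for $m$, whereas you invert $\Phi$ directly via $2\varepsilon\sqrt{m\tau}\ge\Phi^{-1}(1-\delta)$; both give $m=\Omega\left(\frac{1}{\varepsilon^2\tau}\right)$.
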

\begin{proof}
  This proof is almost the same as given by~\citet{Mousavi} and
  is presented here for completeness without claim to originality.
  
  Let us denote random variables $X_i$ for $i=1,2,\ldots,m$ such that each
  $X_i\sim \mathrm{Bernoulli}(\tau)$, yielding the worst-case
  input, drawn independently and identically distributed.
  As before, $X=\sum_{i=1}^{m} X_i$, so $E[X]=m\tau$.  
  Let us approximate $X$ with the normal distribution
  using Lemma~\ref{lem:slud}. It is known~\citep[Equation~3.7.2]{PatelR:1982} that,
  for $x>0$, 
  \[
    1-\Phi \geq \frac{1-\sqrt{1-\exp(-x^2)}}{2} \, .
  \]
  Furthermore, by the fact that $1-\sqrt{x}\geq \frac{1-x}{2}$,
  we can approximate
  \begin{equation}
    \label{eq:chernofflowerbound}
    \begin{split}
      & \Pr[X\geq(1+\varepsilon)E[X]] \\
      \geq & 1-\Phi(2\varepsilon\sqrt{m\tau}) \\
      \geq & \frac{1-\sqrt{1-\exp(-\varepsilon^2m\tau)}}{2} \\
      \geq & \frac{
        \exp(-\varepsilon^2 m \tau)}{4} \, ,
      \end{split}
  \end{equation}
  and since Equation~\eqref{eq:chernofflowerbound} is of the same form
  as the Chernoff bounds of Lemma~\ref{lem:chernoff}, we can conclude
  by the same argument as in the proof of Lemma~\ref{lem:rs} that the
  bound is tight up to a constant for the worst-case input.
\end{proof}

\section{Proof of Lemma~\ref{lem:nncontribution}}
\label{app:nncontributionproof}

We recall Lemma~\ref{lem:nncontribution}.
\nncontributionlemma*

\begin{proof}
  With $h^2 = (\alpha/2) n^{-\beta}$ the kernel evaluates to $K_h(x'_i,y) = \exp(-(i + 1)^\beta)$. With $k = \Theta(\log^{1/\beta} n)$, we get that $K_h(x'_i, y) = \exp(-(i+1)^\beta) = o(1/n)$ for all $i \geq k$. 
  Thus $\textrm{KDE}_{(x'_{k}, \ldots, x'_{n-1})}(y) = n \, o(1/n) = o(1)$, which proves the first statement.

  For the second statement, observe that with $h^2 \geq (\alpha/2) n^{-\beta}/ \ln(1/\tau)$, already the nearest neighbor evaluates to $K_h(x'_0, y) = \exp(-1/\ln(1/\tau)) = \tau$. 
  Since all other data points contribute at most $\tau$, $\textrm{KDE}_X(y) \leq \tau$.

  Finally, by the inequality of arithmetic and geometric means we can lower bound the KDE value as follows:

  \begin{align*}
    & 1/n \sum_{i=0}^{n-1} \exp(-\alpha ((i+1)/n)^\beta (1/h^2)) \\
    \geq & \prod_{i=0}^{n-1} \exp\left(-\alpha (i+1)^\beta n^{-\beta-1}(1/h^2)\right)\\
    = & \exp\left(-(\alpha/(h^2n^{\beta+1})) \sum_{i=1}^n i^\beta\right)\\
    \geq & \exp(-(\alpha/(h^2\beta))) \geq 1 - \delta \, . 
  \end{align*}
  Here, we used that $\sum_{i = 1}^n i^\beta = \frac{n^{\beta +
      1}}{\beta + 1} + O(n^\beta)$ and thus, asymptotically for large
  enough $n$,
  \[\sum_{i = 1}^n i^\beta < n^{\beta + 1}/\beta \, . \]
\end{proof}

\section{Proof of Lemma~\ref{lem:dominates}}
\label{app:dominatesproof}

We recall Lemma~\ref{lem:dominates}.
\dominateslemma*

\begin{proof}
  Given $y$, let $X = (x'_0, \ldots, x'_{n-1})$ be ordered in increasing order by distance to $y$.
  Given $\varepsilon' > 0$ to be set later, let $(n-k) \text{RS}_{(x'_{k}, \ldots, x'_{n-1})}(y)$
  be the value of an $(1+\varepsilon')$ approximation of $(n-k)\textrm{KDE}_{{(x'_{k}, \ldots, x'_{n-1})}}(y)$.
  We compute:
  \begin{align*}
    &\sum_{i = 0}^{k-1} K_h(x'_i, y) + (n-k)\text{RS}_{(x'_{k}, \ldots,
    x'_{n-1})}(y) \\
    \leq &
    \sum_{i = 0}^{k-1} K_h(x'_i, y) + (1+\varepsilon')\sum_{i = k}^{n-1} K_h(x'_i, y)\\
    = &n \textrm{KDE}(y) + \varepsilon'\sum_{i = k}^{n-1} K_h(x'_i, y)\\
    = &n (\textrm{KDE}(y) + \varepsilon'\delta \textrm{KDE}(y)) \, .
  \end{align*}
  This means that to compute a $(1 + \varepsilon)$ approximation, 
  it suffices to compute a $(1 + \varepsilon') = (1 + \varepsilon/\delta)$ approximation on
  $(x'_{k}, \ldots, x'_{n-1})$. 
  Since $\textrm{KDE}_{{(x'_{k}, \ldots, x'_{n-1})}}(y) \geq \delta\tau$, a sample
  of $\Theta\left(\frac{\delta}{\varepsilon^2\tau}\right)$ elements suffices to guarantee 
  a $(1+\varepsilon')$ approximation with constant probability.
\end{proof}

\section{Na\"ive algorithm}
\label{app:naive}

In this section, we describe how matrix multiplication can be used to
speed up the evaluation of the nai\"ve KDE sum when the kernel is
Euclidean. We make no claims of originality, but simply present the
material here for completeness. In this section, we treat the dataset
$X$ as a row-major $n\times d$ matrix.

Suppose we are working in a batch processing case with a set of
$N$~queries $Q=\{q_0,q_1,\ldots,q_{N-1}\}$ which we similarly treat as a
row-major $N\times d$ matrix. We want to evaluate the $N$-element
result vector $z$ whose elements are given by
\begin{equation}
  \label{eq:naivezj}
  z_j = \frac{1}{n}\sum_{i=0}^{n-1} K_h(q_j,x_i) \, .
\end{equation}
Assuming $K_h$ is Euclidean, the evaluation of
Equation~\eqref{eq:naivezj} for all $j=0,1,\ldots,N-1$ can be
considered to consist of (i) evaluating the $N\times n$ matrix $D$
whose elements are given by
\begin{equation}
  \label{eq:matrixddef}
  D_{j,i} = ||q_j-x_i||_2 \, ,
\end{equation}
(ii) applying the (vectorized) functions, the composition of which
equals $K_h$, and (iii) computing the row-wise mean of the resulting
matrix.

Matrix multiplication helps in step (i) through the following observation:
\begin{equation}
  \label{eq:sqeuclideannorm}
  ||x-y||_2^2 = ||x||_2^2 + ||y||_2^2 - 2\left<x,y\right> \, .
\end{equation}
Let us write auxiliary matrices $X_{\mathrm{sq}}$ and $Q_{\mathrm{sq}}$ such that for all $i=0,1,\ldots,n-1$ and $j=0,1,\ldots,N-1$, we have
\begin{equation}
  \label{eq:xrm}
  (X_{\mathrm{sq}})_{j,i} = ||x_i||_2^2 \, ,
\end{equation}
and
\begin{equation}
  \label{eq:qrm}
  (Q_{\mathrm{sq}})_{j,i} = ||q_j||_2^2 \, .
\end{equation}
Importantly, from Equations~\eqref{eq:xrm} and~\ref{eq:qrm}, we have that
\begin{equation}
  \label{eq:xsqpqsq}
  (X_{\mathrm{sq}} + Q_{\mathrm{sq}})_{j,i} = ||q_j||^2_2 + ||x_i||_2^2 \, .
\end{equation}

Now consider the matrix product $QX^\top$. From the definition of the matrix product, it is immediate that
\begin{equation}
  \label{eq:qxtinnerproduct}
  (QX^\top)_{j,i} = \left<q_j,x_i\right> \, .
\end{equation}
If we then let $D^2 = X_{\mathrm{sq}} + Q_{\mathrm{sq}} - 2QX^\top$, we get from Equations~\eqref{eq:sqeuclideannorm}, \eqref{eq:xsqpqsq}, and~\eqref{eq:qxtinnerproduct} that
\begin{equation}
  \label{eq:d2}
  D^2_{j,i} = ||q_j||^2_2 + ||x_i||_2^j - 2\left<q_j,x_i\right> = ||x_i-q_j||_2^2\, .
\end{equation}

The key observation is that it is possible to use matrix
multiplication as a primitive for evaluating the inner product matrix in
Equation~\eqref{eq:d2}. Evaluating the values of the matrix $D$
directly from the definition of Equation~\eqref{eq:matrixddef} one
element at a time requires $\Theta(nNd)$ operations. However, matrix
multiplication is asymptotically faster. For $n=N=d$, the evaluation
goes down to $O(n^\omega)$ operations for
$\omega < 2.3728639$~\citep{LeGall:2014}. Assuming $n=N$ and
$d<n^\alpha$ for $\alpha > 0.31389$, the evaluation can be performed in
$n^{2+o(1)}$ operations~\citep{LeGallU:2018}. Although these
theoretical developments are impractical, significant gains can be
made over implementing the evaluation naively even with the elementary
matrix multiplication algorithm by using, for example, the BLAS Level
3 subroutine \texttt{GEMM}\footnote{Generalized Matrix Multiply, a
  BLAS~\citep{BlackfordDDDHHHKLPPRW:2002} Level 3 subroutine for
  computing the matrix multiplication operation 
  $C \gets \alpha A^\top B + \beta C$.  The Intel MKL provides a
  highly optimized implementation of this routine.}
that is an
aggressively optimized
primitive~\citep{KagstromLL:1998,LiDT:2009,ZhangWZ:2012,AbdelfattahHTD:2016,KimCL:2019,YanWC:2020}. Highly
tuned implementations of \texttt{GEMM}, such as the one provided by
the Intel
MKL, make efficient use of the
CPU features, such as vectorization and cache hierarchy, and provide a
considerable performance boost over simple implementations.

\section{Permuted Random Sampling}
\label{app:rsp}

We present here for completeness the subroutine we use for taking the
optimized random sample in case of Euclidean kernels. Preprocessing
and sampling are presented in Algorithm~\ref{alg:permutedsampling}. We
make no claim to originality, and simply present the algorithm here
for completeness.

\setcounter{algorithm}{1}

\begin{algorithm}[h]
  \caption{Permuted random sampling.}
  \label{alg:permutedsampling}
  \begin{tabular}{@{\hspace*{\algorithmicindent}}lp{0.85\textwidth}}
    \textbf{Input:} & Dataset $X=\{x_0,x_1,\ldots,x_{n-1}\}\subseteq \mathbb R^d$
  \end{tabular}
  \begin{algorithmic}[1]
    \Procedure{Preprocess}{$X$}
    \State Draw permutation $\pi$ on $n$ elements at random.
    \State $X'\gets \{x'_0,x'_1,\ldots,x'_{n-1}\}$ such that $x'_i = x_{\pi(i)}$.
    \State $\ell \gets 0$. \Comment{Running index.}
    \EndProcedure
  \end{algorithmic}
  \hrule
  \vspace{1pt}
  \begin{tabular}{@{\hspace*{\algorithmicindent}}lp{0.35\textwidth}}
    \textbf{Input:} & Query vector $y\in\mathbb{R}^d$, integer number of samples $1
    \leq m \leq n$\\
    \textbf{Output:} & A random sample estimate of $\KDE{X}{y}$.
  \end{tabular}
  \begin{algorithmic}[1]
    \Function{RandomSamplePermuted}{$y$, $m$}
    \State $Z \gets \displaystyle \sum_{i = \ell}^{\ell+m-1}
    K_h(x'_{i\mod n},y)$.
    \State $\ell \gets \ell + m  \mod n$.
    \State \Return $\displaystyle \frac{1}{m}Z$.
    \EndFunction
  \end{algorithmic}
\end{algorithm}

Importantly, if the kernel $K_h$ is Euclidean, the evaluation of the
sample on line 2 can be treated as follows. First, we have either one
or two contiguous, rectangular submatrices of the permuted data
matrix; the latter case occures when the row index~$i$ overflows. We
can then consider the evaluation to take place such that we evaluate
the Euclidean distance to all points in the sample, evaluate the
kernel individually on each distance, possibly using vectorized
operations, and finally compute the mean.

Assume now that $\ell+m < n$. Let $x_{\mathrm{sq}}\in\mathbb{R}^m$ be
a vector of the squared norms of the vectors in the sample, that is,
$(x_{\mathrm{sq}})_j = ||x'_{\ell+j\mod n}||_2^2$ for
$j=0,1,\ldots,m-1$. The elements of this vector can be precomputed
during preprocessing. Then, let $X''$ be the $m\times d$ matrix
consisting of the rows $x'_\ell,x'_{\ell+1},\ldots,x'_{\ell+m-1}$. The
vector of squared Euclidean norms can then be computed in terms of
matrix-vector multiplication as follows:
\[
  z = x_{\mathrm{sq}} + X''y + ||y||_2^2 \, ,
\]
where the last scalar addition is considered to be broadcast to all
elements in the output vector. The matrix-vector product $X''y$ can be
evaluated efficiently using the 
\texttt{GEMV}\footnote{Generalized Matrix Vector multiply,
  a BLAS~\citep{BlackfordDDDHHHKLPPRW:2002} Level 2 subroutine for
  computing the matrix vector multiplication and addition operation of
  $y \gets \alpha Ax + \beta y$. The Intel MKL provides a highly
  optimized implementation of this routine.} subroutine. Generalization to
arbitrary cases follows by performing the operation in two steps
whenever the running index~$i$ overflows the size of the data matrix,
and in all cases by applying the relevant vectorized operations for
evaluating the kernel value.

\section{Detailed discussion of experimental evaluation with the
  exponential kernel}
\label{app:fullresults}

\paragraph{Full results.} Table~\ref{tbl:resultsfull} shows the full
set of results, average query time as milliseconds / query, when
evaluating the query set against the test set, with different
algorithms. The best values are indicated with a bold typeface.

\paragraph{Results on validation set.} Results of the validation step
of the experiments are presented in Table~\ref{tbl:parametresfull}. The
table lists the instances by dataset and target median KDE
value~$\mu$, the bandwidth~$h$ selected for the particular instance by
binary search with respect to the validation set, and the best
performing parameters for different algorithms. The parameters include
the number of random samples~$m$ for Permuted Random Sampling
(\texttt{RSP}), the number of nearest neighbors~$k$, the number of
random samples~$m$, the number of clusters~$n_{\ell}$, and the number
of clusters queried~$n_q$ by our ANN estimator \texttt{DEANNP} when
using FAISS,
the
relative approximation~$\epsilon$ and minimum KDE value~$\tau$ of the
\texttt{HBE} implementation, and the tree leaf size~$\ell$ and
relative error tolerance~$t_r$ for one the scikit-learn algorithms
\texttt{SKKD}. Due to
lack of space, the parameters for other variants are not shown but
they are very similar to the ones shown here.
In some cases, particularly for
\texttt{HBE}, no suitable choice of parameters was found, which is
indicated in the table by the text \textit{n/a}.

The bandwidth values are very small in cases where the nearest neighbors
help a lot with the performance. Indeed, in some cases, such as
\textsc{last.fm}, the bandwidth is below 1, meaning that it actually
expands the distances between the vectors. In some cases, such as
\textsc{Shuttle} at target $\mu$ of 0.00001, the random samples
provide such a small contribution to the overall KDE value that the
best performing parameters for the DEANN use no random samples at
all. Conversely, in several cases, such as all instances of
\textsc{SVHN}, the best choice of parameters for the DEANN was to fall
back to random sampling.

\paragraph{ANN recall.} In most cases, the number of clusters in the FAISS data structure
was rather large in comparison to the size of the dataset, but only
very few clusters were queried. 
This means that only a small fraction of the dataset was inspected to find 
nearest neighbors. While this is good for the throughput of the ANN
estimator, it might result in far-away points being included as
nearest neighbors,
or some true neighbors being missed. 
Let $\mathrm{NN}_k(q)$ and
$\widetilde{\mathrm{NN}}_k(q)$ be the correct set of $k$ nearest neighbors
for the query vector $q$ and the set returned by FAISS, respectively,
and let the query set~$Q$ be the validation set. The average recall
\[
  R = \frac{1}{|Q|}\sum_{q\in Q} \frac{|\mathrm{NN}_k(q) \cap
    \widetilde{\mathrm{NN}}_k(q)|}{|\mathrm{NN}_k(q)|}
\]
is reported per dataset and target KDE value in
Table~\ref{tbl:recallsfull} for both \texttt{DEANN} and
\texttt{DEANNP}.
The table only includes instances where a non-zero number of
nearest neighbors was queried, that is, cases where DEANN fell back to
random sampling are excluded. The table shows that a surprisingly small recall
is sometimes sufficient to achieve a small relative error. This is
particularly true for datasets where the majority of the contribution
came from the random samples. The extreme cases are \textsc{ALOI} at
$\mu=0.001$ with $k=170$ and $m=400$ where a measly $R=0.23$ was
sufficient to achieve the desired relative error, and, at the other
end, \textsc{Shuttle} at $\mu=0.00001$ with $k=50$ and $m=0$ where we
got $R=0.98$.

\paragraph{Robustness considerations.}
Table~\ref{tbl:relerrsall} shows empirically that DEANN generalizes
nicely. The parameters were chosen such that the average relative
error did not exceed 0.1 in the validation set; the table shows that
this translates to low average relative error also in the test set. The
greatest individual observed value was on \textsc{last.fm} at
a target value of 0.01 where the average relative error reached 0.114.

Figure~\ref{fig:errvsqueries} shows the dependence between different
parameter choices from the validation step. Different parameter
choices are plotted and the corresponding average relative error is
shown on the $x$-axis and the effect on runtime---the number of
queries processed per second---on the $y$-axis. Each individual
parameter choice is presented with a marker, and to help visualize the
dependence, a lineplot is drawn between the markers. Each subplot
corresponds to a single dataset, and the different target KDE values
are shown in the same plot with different colors and markers. Only
meaningful parameter choices are shown here; parameter choices that
would yield a worse relative error without gain in query speed are
excluded. The figure shows that the parameter choices form a clear
tradeoff between approximation quality and runtime, meaning it is
possible to tune DEANN to various use cases, depending on the
requirements on approximation quality and query times.

\section{Fixed-parameter experiments}
\label{app:fixedparameter}

In this section, we report on experiments that we carried out using a
\emph{fixed set of parameters}, that is, we made an educated guess for
the constants $k$ and $m$, and evaluated each dataset / target~$\mu$
combination against this choice of parameters using the test set as
the query set. The point of this exercise is to show that the
expensive grid search is not necessary for a practical application;
that it is, in fact, possible to find good enough parameters by
evaluating a the algorithm against a small sample with a good guess of
parameters. This shows the robustness of our algorithm: that it is not
sensitive to the exact correct choice of parameters.

Table~\ref{tbl:fixparamresults} shows the results of evaluating
\texttt{DEANNP} against the test set with the exponential kernel using
the fixed parameters $k=100$, $m=1000$, $n_\ell=512$, and $n_q=1$. The
table lists the time per query, the average relative error, and the
corresponding runtime for the best parameters obtained from the grid
search for comparison at relative error below 0.1. As expected, a
fixed choice of parameters favors some dataset/bandwidth choices more
than others, but overall, the results are encouraging. In terms of
error, the worst behavior is observed in the case of \textsc{Census}
with small bandwidths, and the reason is clear: too few neighbors are
looked at; this is also reflected in the runtime which is more than a
factor of 2 faster than with the parameters that achieve the error
below 0.1. To the other extreme, in the case of \textsc{GloVe} with
the large bandwidth, we get a relative error of 0.014, suggesting that
we could have done with a lot fewer samples.

The practical implication of this exercise is that it suggests the
following procedure for a practical application of the algorithm:
Choose a smallish query set, make a guess of parameters, evaluate
against ground truth, and if the results are not good enough (too high
error or too high runtime), refine the parameters by taking a new guess; since the algorithm
behaves in a very predictable manner, only very few guesses should
suffice in a practical setting to find ``good enough'' parameters,
meaning that an expensive hyperparameter tuning may not always be necessary.

\section{Experiments with the Gaussian kernel}
\label{app:gaussian}

\begin{table*}[h!!]
  \caption{The Full Set of Results of Evaluating the Different Algorithms Against the
    Test Set in Milliseconds / Query.}
  \label{tbl:resultsfull}
  \centering
  \begin{tabular}{@{}l@{\,\,}l@{\,\,}r@{\,\,}r@{\,\,}r@{\,\,}r@{\,\,}r@{\,\,}r@{\,\,}r@{\,\,}r@{\,\,}r@{}}
    Dataset & Target $\mu$ & \texttt{Naive} & \texttt{RS} &
                                                            \texttt{RSP} & \texttt{DEANN} & \texttt{DEANNP} & \texttt{HBE} & \texttt{RSA} & \texttt{SKKD} & \texttt{SKBT}\\\hline
        \textsc{ALOI} & 0.01 & 1.051 & 0.050 & 0.022 & 0.025 & \textbf{0.016} & 0.623 & 0.808 & 58.498 & 48.353\\
    \textsc{ALOI} & 0.001 & 1.058 & 0.326 & \textbf{0.105} & 0.211 & 0.148 & 12.192 & 41.411 & 59.353 & 47.644\\
    \textsc{ALOI} & 0.0001 & 1.055 & 6.477 & 1.698 & 0.270 & \textbf{0.197} & \textit{n/a} & \textit{n/a} & 55.786 & 47.916\\
    \textsc{ALOI} & 0.00001 & 1.057 & 21.781 & 4.548 & 0.219 & \textbf{0.182} & \textit{n/a} & \textit{n/a} & 47.930 & 49.698\\
    \textsc{Census} & 0.01 & 21.201 & 0.257 & \textbf{0.045} & 0.185 & 0.082 & 0.705 & 19.493 & 420.866 & 542.229\\
    \textsc{Census} & 0.001 & 21.821 & 1.268 & \textbf{0.192} & 0.902 & 0.215 & \textit{n/a} & 803.509 & 350.470 & 606.949\\
    \textsc{Census} & 0.0001 & 51.656 & 8.648 & 1.723 & 1.237 & \textbf{0.757} & \textit{n/a} & \textit{n/a} & 253.440 & 462.727\\
    \textsc{Census} & 0.00001 & 22.282 & 51.162 & 9.037 & 1.312 & \textbf{0.736} & \textit{n/a} & \textit{n/a} & 207.266 & 366.852\\
    \textsc{Covtype} & 0.01 & 4.921 & 1.036 & 0.128 & 0.269 & \textbf{0.055} & 0.314 & 20.534 & 46.734 & 50.446\\
    \textsc{Covtype} & 0.001 & 4.913 & 1.797 & \textbf{0.222} & 0.678 & 0.279 & 0.629 & 433.858 & 26.425 & 28.755\\
    \textsc{Covtype} & 0.0001 & 5.992 & 8.182 & 1.824 & 0.596 & \textbf{0.473} & \textit{n/a} & \textit{n/a} & 11.348 & 13.923\\
    \textsc{Covtype} & 0.00001 & 7.818 & 94.322 & 10.177 & \textbf{0.223} & 0.265 & \textit{n/a} & \textit{n/a} & 3.953 & 6.098\\
    \textsc{GloVe} & 0.01 & 11.302 & 0.011 & \textbf{0.001} & 0.005 & 0.003 & 0.347 & 0.207 & 674.429 & 582.650\\
    \textsc{GloVe} & 0.001 & 11.054 & 0.019 & \textbf{0.003} & 0.012 & 0.007 & 6.617 & 0.225 & 699.529 & 586.988\\
    \textsc{GloVe} & 0.0001 & 11.050 & 0.030 & \textbf{0.005} & 0.019 & 0.014 & \textit{n/a} & 0.410 & 704.741 & 581.489\\
    \textsc{GloVe} & 0.00001 & 11.101 & 0.048 & \textbf{0.015} & 0.041 & 0.022 & \textit{n/a} & 1.804 & 709.414 & 621.037\\
    \textsc{last.fm} & 0.01 & 2.593 & 12.704 & 2.145 & 0.227 & \textbf{0.181} & \textit{n/a} & \textit{n/a} & 104.039 & 94.147\\
    \textsc{last.fm} & 0.001 & 2.621 & 17.183 & 2.455 & 0.277 & \textbf{0.222} & \textit{n/a} & \textit{n/a} & 99.893 & 86.006\\
    \textsc{last.fm} & 0.0001 & 2.753 & 48.630 & 4.699 & 0.294 & \textbf{0.247} & \textit{n/a} & \textit{n/a} & 98.582 & 83.999\\
    \textsc{last.fm} & 0.00001 & 2.923 & 40.249 & 5.993 & 0.330 & \textbf{0.263} & \textit{n/a} & \textit{n/a} & 85.621 & 83.367\\
    \textsc{MNIST} & 0.01 & 1.495 & 0.029 & \textbf{0.024} & 0.024 & 0.029 & 1.577 & 0.884 & 94.960 & 63.640\\
    \textsc{MNIST} & 0.001 & 1.507 & 0.090 & \textbf{0.062} & 0.091 & 0.065 & 12.073 & 6.886 & 94.545 & 61.830\\
    \textsc{MNIST} & 0.0001 & 1.504 & 0.422 & 0.213 & 0.345 & \textbf{0.202} & \textit{n/a} & 8.915 & 89.835 & 59.892\\
    \textsc{MNIST} & 0.00001 & 1.524 & 1.172 & 0.773 & 0.609 & \textbf{0.536} & \textit{n/a} & \textit{n/a} & 94.857 & 64.299\\
    \textsc{MSD} & 0.01 & 4.725 & 0.053 & \textbf{0.016} & 0.033 & 0.028 & \textit{n/a} & 1.196 & 181.871 & 209.109\\
    \textsc{MSD} & 0.001 & 4.720 & 0.196 & \textbf{0.065} & 0.248 & 0.066 & \textit{n/a} & 88.375 & 165.613 & 197.519\\
    \textsc{MSD} & 0.0001 & 4.729 & 1.301 & \textbf{0.234} & 0.461 & 0.266 & \textit{n/a} & \textit{n/a} & 171.721 & 203.407\\
    \textsc{MSD} & 0.00001 & 4.754 & 9.898 & 1.482 & 0.754 & \textbf{0.405} & \textit{n/a} & \textit{n/a} & 127.574 & 169.668\\
    \textsc{Shuttle} & 0.01 & 0.407 & 0.145 & \textbf{0.017} & 0.138 & 0.024 & 0.308 & 8.207 & 3.671 & 4.097\\
    \textsc{Shuttle} & 0.001 & 0.402 & 0.864 & \textbf{0.062} & 0.141 & 0.113 & 1.595 & 398.961 & 2.525 & 3.873\\
    \textsc{Shuttle} & 0.0001 & 0.569 & 3.088 & 0.358 & 0.113 & \textbf{0.097} & 545.129 & \textit{n/a} & 1.917 & 3.437\\
    \textsc{Shuttle} & 0.00001 & 0.672 & \textit{n/a} & 0.527 & 0.070 & \textbf{0.065} & \textit{n/a} & \textit{n/a} & 1.064 & 2.436\\
    \textsc{SVHN} & 0.01 & 42.094 & 0.290 & \textbf{0.189} & 0.255 & 0.448 & 11.830 & 56.613 & 3447.218 & 2521.555\\
    \textsc{SVHN} & 0.001 & 42.172 & 0.747 & \textbf{0.500} & 0.698 & 0.938 & \textit{n/a} & 56.270 & 3471.669 & 2509.883\\
    \textsc{SVHN} & 0.0001 & 42.260 & 2.207 & \textbf{1.096} & 1.503 & 1.459 & \textit{n/a} & 83210.996 & 3455.433 & 2495.796\\
    \textsc{SVHN} & 0.00001 & 41.748 & 3.743 & \textbf{2.262} & 3.758 & 2.852 & \textit{n/a} & \textit{n/a} & 3496.380 & 2445.718\\
  \end{tabular}
\end{table*}

\begin{table*}[h!!]
  \caption{Results Of The Validation Step Of The Experiments Including
    The Best Performing Parameters For Some Algorithms.}
  \label{tbl:parametresfull}
  \centering
\begin{tabular}{@{}l@{\enspace}l@{\enspace}r@{\enspace}r@{\enspace}r@{\enspace}r@{\enspace}r@{\enspace}r@{\enspace}r@{\enspace}r@{\enspace}r@{\enspace}r@{}}
     &  &  & \texttt{RSP} & \multicolumn{4}{c}{\texttt{DEANNP}} & \multicolumn{2}{c}{\texttt{HBE}} & \multicolumn{2}{c}{\texttt{SKKD}}\\
    \cmidrule(lr){4-4} \cmidrule(lr){5-8} \cmidrule(lr){9-10} \cmidrule(lr){11-12}
    Dataset & Target $\mu$ & $h$ & $m$ & $k$ & $m$ & $n_\ell$ & $n_q$ & $\epsilon$ & $\tau$ & $\ell$ & $t_r$\\\hline
\textsc{ALOI} &  0.01 & 3.3366 & 230 & 0 & 170 & 512 & 1 & 1.1 & 0.001 & 40 & 0.2\\
\textsc{ALOI} &  0.001 & 2.0346 & 1800 & 0 & 2100 & 512 & 1 & 0.6 & 0.0001 & 90 & 0.2\\
\textsc{ALOI} &  0.0001 & 1.3300 & 29000 & 170 & 500 & 1024 & 5 & \textit{n/a} & \textit{n/a} & 80 & 0.2\\
\textsc{ALOI} &  0.00001 & 0.8648 & 78000 & 120 & 430 & 1024 & 5 & \textit{n/a} & \textit{n/a} & 90 & 0.2\\
\textsc{Census} & 0.01 & 3.6228 & 1000 & 0 & 800 & 512 & 1 & 0.95 & 0.0005 & 80 & 0.4\\
\textsc{Census} & 0.001 & 1.9416 & 6000 & 0 & 5000 & 512 & 1 & \textit{n/a} & \textit{n/a} & 100 & 0.25\\
\textsc{Census} & 0.0001 & 1.1907 & 40000 & 700 & 5500 & 1024 & 1 & \textit{n/a} & \textit{n/a} & 10 & 0.2\\
\textsc{Census} & 0.00001 & 0.7826 & 300000 & 800 & 5000 & 4096 & 5 & \textit{n/a} & \textit{n/a} & 60 & 0.2\\
\textsc{Covtype} & 0.01 & 245.8858 & 5000 & 0 & 1300 & 512 & 1 & 1.3 & 0.0001 & 90 & 0.3\\
\textsc{Covtype} & 0.001 & 119.2450 & 9000 & 0 & 8500 & 512 & 1 & 1.5 & 0.0001 & 100 & 0.2\\
\textsc{Covtype} & 0.0001 & 63.4887 & 70000 & 1300 & 1400 & 2048 & 5 & \textit{n/a} & \textit{n/a} & 30 & 0.2\\
\textsc{Covtype} & 0.00001 & 33.1331 & 350000 & 300 & 500 & 2048 & 5 & \textit{n/a} & \textit{n/a} & 100 & 0.2\\
\textsc{GloVe} & 0.01 & 1.5782 & 20 & 0 & 20 & 512 & 1 & 1.2 & 0.001 & 90 & 0.15\\
\textsc{GloVe} & 0.001 & 1.0372 & 50 & 0 & 50 & 512 & 1 & 0.75 & 0.0001 & 50 & 0.2\\
\textsc{GloVe} & 0.0001 & 0.7674 & 90 & 0 & 90 & 512 & 1 & \textit{n/a} & \textit{n/a} & 50 & 0.1\\
\textsc{GloVe} & 0.00001 & 0.6028 & 160 & 0 & 160 & 512 & 1 & \textit{n/a} & \textit{n/a} & 90 & 0.2\\
\textsc{last.fm} & 0.01 & 0.0041 & 75000 & 60 & 350 & 1024 & 1 & \textit{n/a} & \textit{n/a} & 10 & 0.2\\
\textsc{last.fm} & 0.001 & 0.0026 & 85000 & 70 & 800 & 512 & 1 & \textit{n/a} & \textit{n/a} & 10 & 0.15\\
\textsc{last.fm} & 0.0001 & 0.0019 & 160000 & 50 & 350 & 2048 & 5 & \textit{n/a} & \textit{n/a} & 20 & 0.1\\
\textsc{last.fm} & 0.00001 & 0.0015 & 200000 & 80 & 450 & 2048 & 5 & \textit{n/a} & \textit{n/a} & 100 & 0.15\\
\textsc{MNIST} & 0.01 & 532.9814 & 40 & 0 & 40 & 512 & 1 & 1.2 & 0.001 & 50 & 0.2\\
\textsc{MNIST} & 0.001 & 348.4158 & 150 & 0 & 150 & 512 & 1 & 1.05 & 0.0001 & 50 & 0.0\\
\textsc{MNIST} & 0.0001 & 255.3234 & 600 & 0 & 600 & 512 & 1 & \textit{n/a} & \textit{n/a} & 100 & 0.5\\
\textsc{MNIST} & 0.00001 & 198.7733 & 2200 & 140 & 450 & 512 & 5 & \textit{n/a} & \textit{n/a} & 50 & 0.0\\
\textsc{MSD} & 0.01 & 498.4585 & 230 & 0 & 230 & 512 & 1 & \textit{n/a} & \textit{n/a} & 90 & 0.2\\
\textsc{MSD} & 0.001 & 312.7048 & 1200 & 0 & 1000 & 512 & 1 & \textit{n/a} & \textit{n/a} & 90 & 0.2\\
\textsc{MSD} & 0.0001 & 222.0082 & 5500 & 0 & 5300 & 512 & 1 & \textit{n/a} & \textit{n/a} & 90 & 0.1\\
\textsc{MSD} & 0.00001 & 168.9344 & 36000 & 210 & 2100 & 2048 & 5 & \textit{n/a} & \textit{n/a} & 20 & 0.2\\
\textsc{Shuttle} & 0.01 & 4.9727 & 1900 & 0 & 1900 & 512 & 1 & 1.1 & 0.0001 & 20 & 0.2\\
\textsc{Shuttle} & 0.001 & 2.3504 & 11000 & 200 & 500 & 512 & 5 & 1.0 & 0.00001 & 60 & 0.2\\
\textsc{Shuttle} & 0.0001 & 1.1605 & 45000 & 100 & 500 & 512 & 5 & 0.1 & 0.000005 & 100 & 0.2\\
\textsc{Shuttle} & 0.00001 & 0.5648 & 52000 & 50 & 0 & 512 & 5 & \textit{n/a} & \textit{n/a} & 10 & 0.2\\
\textsc{SVHN} & 0.01 & 632.7492 & 150 & 0 & 120 & 512 & 1 & 1.2 & 0.0001 & 70 & 0.2\\
\textsc{SVHN} & 0.001 & 391.3900 & 400 & 0 & 350 & 512 & 1 & \textit{n/a} & \textit{n/a} & 60 & 0.2\\
\textsc{SVHN} & 0.0001 & 277.1836 & 900 & 0 & 800 & 512 & 1 & \textit{n/a} & \textit{n/a} & 60 & 0.2\\
\textsc{SVHN} & 0.00001 & 211.4066 & 1900 & 0 & 2000 & 512 & 1 & \textit{n/a} & \textit{n/a} & 60 & 0.2\\
  \end{tabular}
\end{table*}

\begin{table*}[h!!]
\caption{Recall Rates For Approximate Nearest Neighbors Returned By
  FAISS At Different Parameter Values.}
  \label{tbl:recallsfull}
  \centering
  \begin{tabular}{llrrrrrrrrrr}
     &  & \multicolumn{5}{c}{\texttt{DEANN}} & \multicolumn{5}{c}{\texttt{DEANNP}}\\
    \cmidrule(lr){3-7} \cmidrule(lr){8-12}
    Dataset & Target $\mu$ & $k$ & $m$ & $n_\ell$ & $n_q$ & $R$ & $k$ & $m$ & $n_\ell$ & $n_q$ & $R$\\\hline
\textsc{ALOI} &  0.001 & 170 & 400 & 512 & 1 & 0.23 & \textit{n/a} & \textit{n/a} & \textit{n/a} & \textit{n/a} & \textit{n/a}\\
\textsc{ALOI} &  0.0001 & 200 & 430 & 1024 & 5 & 0.72 & 170 & 500 & 1024 & 5 & 0.74\\
\textsc{ALOI} &  0.00001 & 200 & 270 & 1024 & 5 & 0.72 & 120 & 430 & 1024 & 5 & 0.78\\
\textsc{Census} & 0.001 & 500 & 3000 & 4096 & 1 & 0.31 & \textit{n/a} & \textit{n/a} & \textit{n/a} & \textit{n/a} & \textit{n/a}\\
\textsc{Census} & 0.0001 & 1300 & 3000 & 4096 & 5 & 0.50 & 700 & 5500 & 1024 & 1 & 0.69\\
\textsc{Census} & 0.00001 & 1400 & 3000 & 4096 & 10 & 0.77 & 800 & 5000 & 4096 & 5 & 0.58\\
\textsc{Covtype} & 0.001 & 1200 & 1100 & 1024 & 5 & 0.97 & \textit{n/a} & \textit{n/a} & \textit{n/a} & \textit{n/a} & \textit{n/a}\\
\textsc{Covtype} & 0.0001 & 900 & 1000 & 1024 & 5 & 0.99 & 1300 & 1400 & 2048 & 5 & 0.72\\
\textsc{Covtype} & 0.00001 & 350 & 0 & 2048 & 5 & 0.85 & 300 & 500 & 2048 & 5 & 0.86\\
\textsc{last.fm} & 0.01 & 50 & 400 & 2048 & 1 & 0.24 & 60 & 350 & 1024 & 1 & 0.88\\
\textsc{last.fm} & 0.001 & 70 & 200 & 2048 & 5 & 0.86 & 70 & 800 & 512 & 1 & 0.97\\
\textsc{last.fm} & 0.0001 & 70 & 300 & 2048 & 5 & 0.86 & 50 & 350 & 2048 & 5 & 0.90\\
\textsc{last.fm} & 0.00001 & 80 & 400 & 2048 & 5 & 0.86 & 80 & 450 & 2048 & 5 & 0.86\\
\textsc{MNIST} & 0.00001 & 400 & 300 & 512 & 5 & 0.77 & 140 & 450 & 512 & 5 & 0.95\\
\textsc{MSD} & 0.0001 & 140 & 1000 & 2048 & 5 & 0.46 & \textit{n/a} & \textit{n/a} & \textit{n/a} & \textit{n/a} & \textit{n/a}\\
\textsc{MSD} & 0.00001 & 210 & 1800 & 4096 & 10 & 0.45 & 210 & 2100 & 2048 & 5 & 0.43\\
\textsc{Shuttle} & 0.001 & 300 & 350 & 512 & 5 & 0.84 & 200 & 500 & 512 & 5 & 0.87\\
\textsc{Shuttle} & 0.0001 & 200 & 200 & 512 & 5 & 0.87 & 100 & 500 & 512 & 5 & 0.89\\
\textsc{Shuttle} & 0.00001 & 50 & 0 & 512 & 5 & 0.98 & 50 & 0 & 512 & 5 & 0.98\\
  \end{tabular}
\end{table*}

\begin{table*}[h!!]
  \caption{Average Relative Error Against The Test Set With Best
    Parameters.}
  \label{tbl:relerrsall}
  \centering
  \begin{tabular}{@{}l@{\enspace}l@{\enspace}r@{\enspace}r@{\enspace}r@{\enspace}r@{\enspace}r@{\enspace}r@{\enspace}r@{\enspace}r@{\enspace}r@{}}
    Dataset & Target $\mu$ & \texttt{Naive} & \texttt{RS} & \texttt{RSP} & \texttt{DEANN} & \texttt{DEANNP} & \texttt{HBE} & \texttt{RSA} & \texttt{SKKD} & \texttt{SKBT}\\\hline
    \textsc{ALOI} & 0.01 & 0.000 & 0.095 & 0.090 & 0.100 & 0.102 & 0.110 & 0.099 & 0.076 & 0.091\\
    \textsc{ALOI} & 0.001 & 0.000 & 0.106 & 0.113 & 0.104 & 0.101 & 0.096 & 0.097 & 0.092 & 0.097\\
    \textsc{ALOI} & 0.0001 & 0.000 & 0.102 & 0.099 & 0.100 & 0.100 & \textit{n/a} & \textit{n/a} & 0.098 & 0.098\\
    \textsc{ALOI} & 0.00001 & 0.000 & 0.072 & 0.102 & 0.092 & 0.094 & \textit{n/a} & \textit{n/a} & 0.099 & 0.098\\
    \textsc{Census} & 0.01 & 0.001 & 0.081 & 0.087 & 0.087 & 0.094 & 0.090 & 0.079 & 0.092 & 0.087\\
    \textsc{Census} & 0.001 & 0.002 & 0.087 & 0.082 & 0.094 & 0.091 & \textit{n/a} & 0.064 & 0.091 & 0.095\\
    \textsc{Census} & 0.0001 & 0.002 & 0.084 & 0.088 & 0.103 & 0.105 & \textit{n/a} & \textit{n/a} & 0.088 & 0.099\\
    \textsc{Census} & 0.00001 & 0.001 & 0.077 & 0.079 & 0.095 & 0.103 & \textit{n/a} & \textit{n/a} & 0.094 & 0.098\\
    \textsc{Covtype} & 0.01 & 0.001 & 0.047 & 0.045 & 0.094 & 0.094 & 0.095 & 0.086 & 0.098 & 0.099\\
    \textsc{Covtype} & 0.001 & 0.000 & 0.093 & 0.094 & 0.098 & 0.097 & 0.099 & 0.065 & 0.081 & 0.088\\
    \textsc{Covtype} & 0.0001 & 0.000 & 0.142 & 0.097 & 0.096 & 0.092 & \textit{n/a} & \textit{n/a} & 0.090 & 0.090\\
    \textsc{Covtype} & 0.00001 & 0.000 & 0.074 & 0.098 & 0.098 & 0.093 & \textit{n/a} & \textit{n/a} & 0.092 & 0.087\\
    \textsc{GloVe} & 0.01 & 0.000 & 0.095 & 0.096 & 0.095 & 0.097 & 0.124 & 0.089 & 0.069 & 0.096\\
    \textsc{GloVe} & 0.001 & 0.000 & 0.093 & 0.092 & 0.093 & 0.093 & 0.091 & 0.090 & 0.097 & 0.070\\
    \textsc{GloVe} & 0.0001 & 0.000 & 0.095 & 0.095 & 0.102 & 0.098 & \textit{n/a} & 0.108 & 0.047 & 0.080\\
    \textsc{GloVe} & 0.00001 & 0.000 & 0.097 & 0.098 & 0.096 & 0.098 & \textit{n/a} & 0.060 & 0.090 & 0.020\\
    \textsc{last.fm} & 0.01 & 0.001 & 0.061 & 0.052 & 0.111 & 0.114 & \textit{n/a} & \textit{n/a} & 0.094 & 0.091\\
    \textsc{last.fm} & 0.001 & 0.001 & 0.095 & 0.092 & 0.111 & 0.089 & \textit{n/a} & \textit{n/a} & 0.086 & 0.056\\
    \textsc{last.fm} & 0.0001 & 0.002 & 0.056 & 0.086 & 0.109 & 0.108 & \textit{n/a} & \textit{n/a} & 0.051 & 0.073\\
    \textsc{last.fm} & 0.00001 & 0.004 & 0.093 & 0.088 & 0.092 & 0.096 & \textit{n/a} & \textit{n/a} & 0.105 & 0.161\\
    \textsc{MNIST} & 0.01 & 0.000 & 0.090 & 0.094 & 0.091 & 0.092 & 0.103 & 0.093 & 0.082 & 0.093\\
    \textsc{MNIST} & 0.001 & 0.000 & 0.098 & 0.097 & 0.094 & 0.096 & 0.093 & 0.083 & 0.000 & 0.000\\
    \textsc{MNIST} & 0.0001 & 0.000 & 0.088 & 0.095 & 0.092 & 0.093 & \textit{n/a} & 0.104 & 0.006 & 0.000\\
    \textsc{MNIST} & 0.00001 & 0.000 & 0.102 & 0.100 & 0.098 & 0.094 & \textit{n/a} & \textit{n/a} & 0.000 & 0.000\\
    \textsc{MSD} & 0.01 & 0.000 & 0.103 & 0.097 & 0.097 & 0.100 & \textit{n/a} & 0.068 & 0.080 & 0.087\\
    \textsc{MSD} & 0.001 & 0.000 & 0.101 & 0.148 & 0.091 & 0.107 & \textit{n/a} & 0.097 & 0.091 & 0.095\\
    \textsc{MSD} & 0.0001 & 0.000 & 0.148 & 0.096 & 0.107 & 0.098 & \textit{n/a} & \textit{n/a} & 0.047 & 0.098\\
    \textsc{MSD} & 0.00001 & 0.000 & 0.096 & 0.091 & 0.103 & 0.100 & \textit{n/a} & \textit{n/a} & 0.096 & 0.099\\
    \textsc{Shuttle} & 0.01 & 0.000 & 0.094 & 0.095 & 0.096 & 0.098 & 0.105 & 0.091 & 0.080 & 0.093\\
    \textsc{Shuttle} & 0.001 & 0.000 & 0.119 & 0.102 & 0.099 & 0.101 & 0.090 & 0.069 & 0.091 & 0.095\\
    \textsc{Shuttle} & 0.0001 & 0.002 & 0.120 & 0.065 & 0.096 & 0.102 & 0.097 & \textit{n/a} & 0.095 & 0.094\\
    \textsc{Shuttle} & 0.00001 & 0.002 & \textit{n/a} & 0.084 & 0.073 & 0.073 & \textit{n/a} & \textit{n/a} & 0.094 & 0.090\\
    \textsc{SVHN} & 0.01 & 0.000 & 0.081 & 0.081 & 0.092 & 0.093 & 0.109 & 0.048 & 0.098 & 0.098\\
    \textsc{SVHN} & 0.001 & 0.000 & 0.084 & 0.084 & 0.088 & 0.090 & \textit{n/a} & 0.080 & 0.099 & 0.099\\
    \textsc{SVHN} & 0.0001 & 0.000 & 0.076 & 0.087 & 0.090 & 0.091 & \textit{n/a} & 0.053 & 0.099 & 0.099\\
    \textsc{SVHN} & 0.00001 & 0.000 & 0.090 & 0.098 & 0.089 & 0.091 & \textit{n/a} & \textit{n/a} & 0.099 & 0.099\\
  \end{tabular}
\end{table*}

\begin{table*}[ht]
  \caption{
    Average Relative Error Against The Test Set With Best Parameters.
  }
  \label{tbl:relerrs}
  \centering
  \begin{tabular}{@{}l@{\enspace}l@{\enspace}r@{\enspace}r@{\enspace}r@{\enspace}r@{\enspace}r@{\enspace}r@{\enspace}r@{\enspace}r@{\enspace}r@{}}
    Dataset & Target $\mu$ & \texttt{Naive} & \texttt{RS} & \texttt{RSP} & \texttt{DEANN} & \texttt{DEANNP} & \texttt{HBE} & \texttt{RSA} & \texttt{SKKD} & \texttt{SKBT}\\\hline
    \textsc{ALOI} & 0.01 & 0.000 & 0.095 & 0.090 & 0.100 & 0.102 & 0.110 & 0.099 & 0.076 & 0.091\\
    \textsc{ALOI} & 0.001 & 0.000 & 0.106 & 0.113 & 0.104 & 0.101 & 0.096 & 0.097 & 0.092 & 0.097\\
    \textsc{ALOI} & 0.0001 & 0.000 & 0.102 & 0.099 & 0.100 & 0.100 & \textit{n/a} & \textit{n/a} & 0.098 & 0.098\\
    \textsc{ALOI} & 0.00001 & 0.000 & 0.072 & 0.102 & 0.092 & 0.094 & \textit{n/a} & \textit{n/a} & 0.099 & 0.098\\
    \textsc{last.fm} & 0.01 & 0.001 & 0.061 & 0.052 & 0.111 & 0.114 & \textit{n/a} & \textit{n/a} & 0.094 & 0.091\\
    \textsc{last.fm} & 0.001 & 0.001 & 0.095 & 0.092 & 0.111 & 0.089 & \textit{n/a} & \textit{n/a} & 0.086 & 0.056\\
    \textsc{last.fm} & 0.0001 & 0.002 & 0.056 & 0.086 & 0.109 & 0.108 & \textit{n/a} & \textit{n/a} & 0.051 & 0.073\\
    \textsc{last.fm} & 0.00001 & 0.004 & 0.093 & 0.088 & 0.092 & 0.096 & \textit{n/a} & \textit{n/a} & 0.105 & 0.161\\
    \textsc{SVHN} & 0.01 & 0.000 & 0.081 & 0.081 & 0.092 & 0.093 & 0.109 & 0.048 & 0.098 & 0.098\\
    \textsc{SVHN} & 0.001 & 0.000 & 0.084 & 0.084 & 0.088 & 0.090 & \textit{n/a} & 0.080 & 0.099 & 0.099\\
    \textsc{SVHN} & 0.0001 & 0.000 & 0.076 & 0.087 & 0.090 & 0.091 & \textit{n/a} & 0.053 & 0.099 & 0.099\\
    \textsc{SVHN} & 0.00001 & 0.000 & 0.090 & 0.098 & 0.089 & 0.091 & \textit{n/a} & \textit{n/a} & 0.099 & 0.099\\
  \end{tabular}
\end{table*}

\begin{figure*}[h!!]
  \centering
  \includegraphics[width=\linewidth]{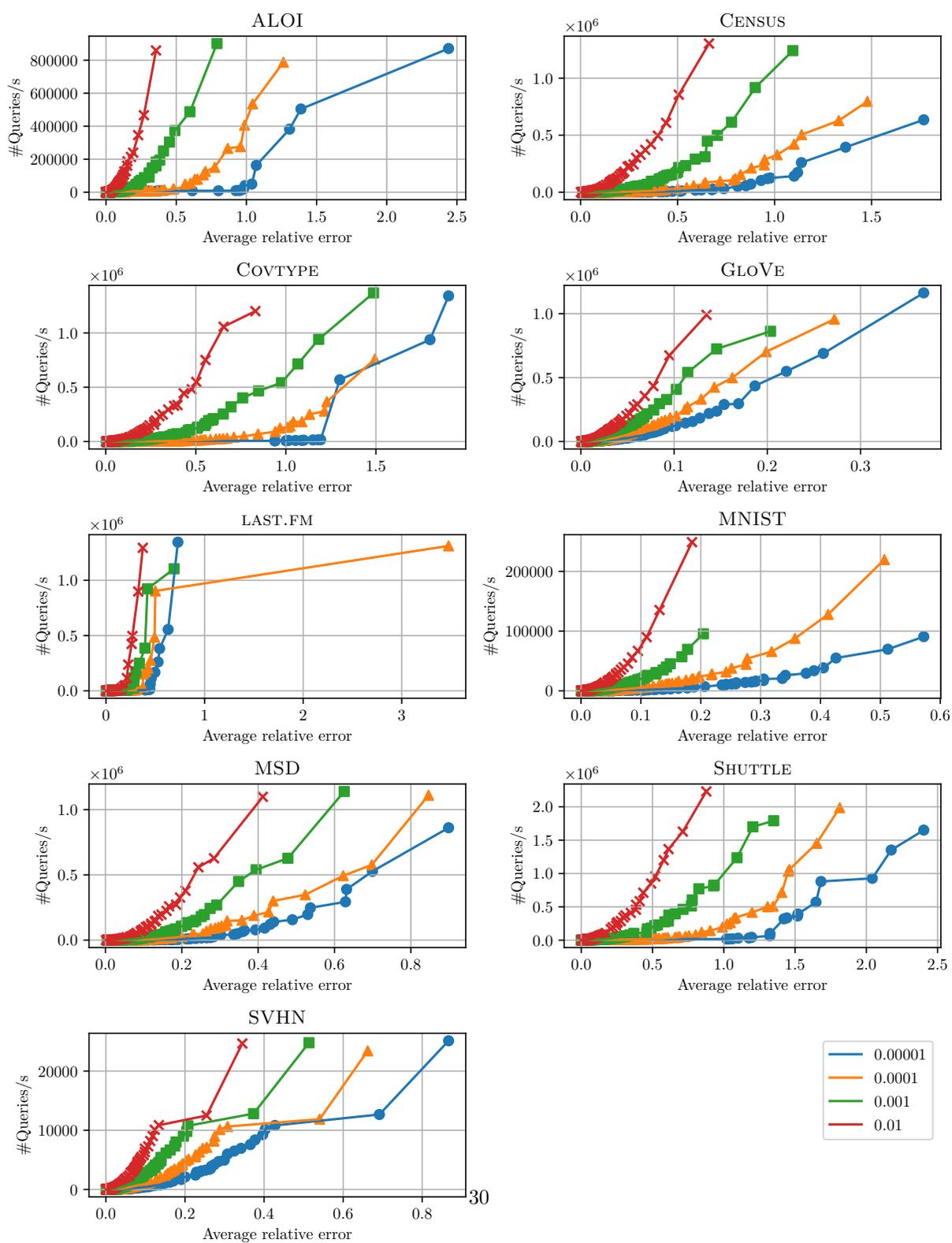}
  \caption{Average Relative Error Vs. Query Time Tradeoff At Different
  Parameter Choices, Reported For \texttt{DEANNP}.}
  \label{fig:errvsqueries}
\end{figure*}

\begin{table*}[h!!]
  \caption{Query Times in Milliseconds / Query and Relative Errors
    Using Fixed Parameters ($k=100$, $m=1000$, $n_\ell=512$, $n_q=1$) vs. Parameters from Grid Search.}
  \label{tbl:fixparamresults}
  \begin{center}
  \begin{tabular}{llrrr}
            &              & Grid Search
    & \multicolumn{2}{c}{Fixed parameters} \\
    \cmidrule(lr){3-3} \cmidrule(lr){4-5}
    Dataset & Target $\mu$ & Query Time & Query Time & Relative Error \\\hline
\textsc{ALOI} & 0.01 & 0.014 & 0.188 & 0.036\\
\textsc{ALOI} & 0.001 & 0.156 & 0.169 & 0.070\\
\textsc{ALOI} & 0.0001 & 0.229 & 0.167 & 0.117\\
\textsc{ALOI} & 0.00001 & 0.209 & 0.167 & 0.174\\
\textsc{Census} & 0.01 & 0.031 & 0.329 & 0.082\\
\textsc{Census} & 0.001 & 0.189 & 0.334 & 0.179\\
\textsc{Census} & 0.0001 & 0.876 & 0.345 & 0.317\\
\textsc{Census} & 0.00001 & 0.889 & 0.331 & 0.452\\
\textsc{Covtype} & 0.01 & 0.046 & 0.177 & 0.099\\
\textsc{Covtype} & 0.001 & 0.272 & 0.176 & 0.206\\
\textsc{Covtype} & 0.0001 & 0.536 & 0.222 & 0.358\\
\textsc{Covtype} & 0.00001 & 0.357 & 0.184 & 0.359\\
\textsc{GloVe} & 0.01 & 0.001 & 0.316 & 0.014\\
\textsc{GloVe} & 0.001 & 0.003 & 0.321 & 0.020\\
\textsc{GloVe} & 0.0001 & 0.005 & 0.316 & 0.029\\
\textsc{GloVe} & 0.00001 & 0.008 & 0.321 & 0.039\\
\textsc{last.fm} & 0.01 & 0.204 & 0.253 & 0.073\\
\textsc{last.fm} & 0.001 & 0.240 & 0.230 & 0.140\\
\textsc{last.fm} & 0.0001 & 0.269 & 0.226 & 0.109\\
\textsc{last.fm} & 0.00001 & 0.286 & 0.230 & 0.214\\
\textsc{MNIST} & 0.01 & 0.015 & 0.492 & 0.018\\
\textsc{MNIST} & 0.001 & 0.052 & 0.494 & 0.034\\
\textsc{MNIST} & 0.0001 & 0.212 & 0.490 & 0.059\\
\textsc{MNIST} & 0.00001 & 0.724 & 0.493 & 0.104\\
\textsc{MSD} & 0.01 & 0.012 & 0.202 & 0.047\\
\textsc{MSD} & 0.001 & 0.054 & 0.201 & 0.083\\
\textsc{MSD} & 0.0001 & 0.266 & 0.199 & 0.126\\
\textsc{MSD} & 0.00001 & 0.426 & 0.208 & 0.175\\
\textsc{Shuttle} & 0.01 & 0.025 & 0.091 & 0.090\\
\textsc{Shuttle} & 0.001 & 0.133 & 0.120 & 0.143\\
\textsc{Shuttle} & 0.0001 & 0.111 & 0.109 & 0.224\\
\textsc{Shuttle} & 0.00001 & 0.078 & 0.101 & 0.274\\
\textsc{SVHN} & 0.01 & 0.145 & 3.133 & 0.032\\
\textsc{SVHN} & 0.001 & 0.423 & 2.925 & 0.052\\
\textsc{SVHN} & 0.0001 & 1.063 & 2.929 & 0.079\\
\textsc{SVHN} & 0.00001 & 2.404 & 2.894 & 0.117\\
  \end{tabular}
  \end{center}
\end{table*}

This section details supplementary experiments that were evaluated
with respect to the Gaussian kernel. The experiments also included
\texttt{ASKIT}~\citep{MarchXB:2015} as a competitor.

The experiments were performed on the same physical hardware as those
with the exponential kernel, but with an improved experimental
framework where all implementations ran inside Docker containers to
isolate them from the rest of the environment; in particular, this
enabled us to run \texttt{ASKIT} which depends on the Intel toolchain,
including the Intel MPI libraries, for compilation. The scripts for
creating the Docker images are included in the code\footnote{\url{https://github.com/mkarppa/deann-experiments}}. In the
experiments, all implementations were limited to 32~GiB of memory, and
any runs that exceeded the memory limitation were terminated. The
total runtime of a run (including preprocessing) was limited to 6
hours, excluding \texttt{SKKD} and \texttt{SKBT}; this means that if
the implementation could not build its datastructures and evaluate the
queries with a certain set of parameters within the time limit, the
run was terminated, and any subsequent runs whose parametrization
would imply a longer runtime were not allowed to run either.

For \texttt{DEANN} and \texttt{DEANNP}, we fixed $n_q = 1$ and let
$n_\ell \in \{ 32, 64, 128, 256, 512, 1024, 2048, 4096\}$. The
perception here is that, for example, the choice $n_q=2,n_\ell=64$ is
essentially the same as $n_q=1,n_\ell=32$ since we would expect to
look at a similar number of near neighbors, assuming the points in the
training set are somewhat well-behaved in their distribution among the
different clusters in the $k$-means that FAISS does in its index
building. The parameters $k$ and $m$ were selected to be from multiple
scales using the formula
\begin{equation}
  \label{eq:kmformula}
  k,m \in \{ 10\cdot(\sqrt{2})^i : i=0,1,\ldots \} \, ,
\end{equation}
such that $k$ and $m$ satisfy $k+m<n$. A cartesian product of the
parameters $(k,m,n_\ell)$ was then probed against
the validation set.

For \texttt{RS} and \texttt{RSP}, the candidates for the parameter $m$ were chosen by the
same formula of Equation~\eqref{eq:kmformula} such that $m < n$.

For \texttt{SKKD} and \texttt{SKBT}, the parameter
$\ell$ was set to $\mathrm{round}(10\cdot(\sqrt{2})^i)$ for
$i\in\{0,1,\ldots,9\}$. The parameter $t_r$ was set to
$\{0,0.05,0.1,0.15,0.2,0.25,0.3,0.35,0.4,0.45,0.5\}$.
A cartesian product of the values $(\ell,t_r)$ was probed against the
validation set.

For \texttt{HBE} and \texttt{RS}, the parameter $\epsilon$ was set to
$\{0.1,0.15,0.2,\ldots, 1.5\}$ and the parameter $\tau$ was set to
$0.01 / (\sqrt{2})^i$ for $i\in\{0,1,\ldots,19\}$ and rounded to 5
decimal digits, so $\tau$ ranged from 0.01 to 0.00001. A cartesian
product of the values $(\epsilon,\tau)$ was probed against the
validation set.

For \texttt{ASKIT}, we followed the suggestions provided by~\cite{MarchXB:2015}.
For the initial pilot experiments,
we set $\text{id}_\text{tol}$ to $10^i$ with $-10 \leq i \leq 2$, 
provide $\kappa = 100$ nearest neighbors for each dataset and query
set point,
set the max number of points $m$ to $2^i$ with $6 \leq i \leq 12$, and set the oversampling factor to 
$f \in \{2, 5, 10\}$. 
After noticing that we could not obtain low relative error for small bandwidth choices by exploring these parameter choices,
we further set skeleton targets $t_\text{skel} \in \{2, 5, 10\}$ and
and set the minimum skeleton level to $\ell_\text{skel} \in \{2, 5, 10, 20\}$.
After these initial experiments, we pruned the parameter space and details can be seen in the script \texttt{generate\_askit.py} in the GitHub repository\footnote{\url{https://github.com/mkarppa/deann-experiments}}.
We remark that \texttt{ASKIT} assumes that the nearest neighbors for
the whole dataset and query set are given as input during preprocessing, which is extremely costly.
On our setup, it took 16 hours using FAISS with multi-threading using 48 threads to precompute this information for the 9 datasets in question.
In contrast, our variants of \texttt{DEANN} compute these neighbors during query time.

When computing the relative error (especially in the validation step),
we excluded points whose exact KDE value was below $10^{-16}$, since
the single-precision floating-point arithmetic turned out to be
numerically too unstable to be useful at that point.

The main results are shown in
Table~\ref{tbl:mainresultsgaussian}. The results agree with the
those obtained with the exponential kernel; in almost all cases,
either \texttt{DEANNP} or \texttt{RSP} is the fastest algorithm, and
\texttt{DEANNP} is seldom very far behind, as it can fall back to
essentially the same random sampling algorithm. Surprisingly, in the
cases of \textsc{last.fm} and \textsc{MNIST} at target KDE value of
$0.00001$, the \texttt{Naive} algorithm turned out to be
unbeatable. Low error requires too many points to be looked at for
approximate algorithms to be very effective in this low bandwidth
regime.

Table~\ref{tbl:mainresultsgaussiankeqm} shows complementary results
where we have limited ourselves to the case that $k=m$ when tuning the
parameters of \texttt{DEANN} and \texttt{DEANNP}. To make the effect
even clearer, Table~\ref{tbl:mainresultsgaussianratio} shows the ratio
of the runtime for \texttt{DEANN} and \texttt{DEANNP} from
Table~\ref{tbl:mainresultsgaussiankeqm} divided by the
runtime with the best parameters from
Table~\ref{tbl:mainresultsgaussian}. While it is clear that sometimes the
number of nearest neighbors and random samples that one ought to look
at are unbalanced, the situation is not desperate, and in many cases
useful results can be obtained even while restricting the search space.

\begin{table*}[h!!]
  \begin{center}
    \caption{Results of Evaluating the Different Algorithms Against
      the Test Set in Milliseconds / Query with the Gaussian Kernel.}
    \label{tbl:mainresultsgaussian}
\begin{tabular}{@{}l@{\,}l@{\,}r@{\,}r@{\,}r@{\,}r@{\,}r@{\,}r@{\,}r@{\,}r@{\,}r@{\,}r@{}}
Dataset & Target $\mu$ & \texttt{Naive} & \texttt{RS} & \texttt{RSP} & \texttt{DEANN} & \texttt{DEANNP} & \texttt{HBE} & \texttt{RSA} & \texttt{SKKD} & \texttt{SKBT} & \texttt{ASKIT}\\\hline\textsc{ALOI} & 0.01 & 1.036 & 0.329 & \textbf{0.084} & 0.286 & 0.089 & 4.748 & 15.502 & 51.882 & 43.581 & 0.491\\
\textsc{ALOI} & 0.001 & 1.002 & 7.858 & 1.676 & 0.768 & \textbf{0.296} & \textit{n/a} & 971.296 & 48.001 & 44.614 & 1.242\\
\textsc{ALOI} & 0.0001 & 1.079 & 16.646 & 5.128 & 1.552 & \textbf{0.519} & \textit{n/a} & \textit{n/a} & 36.452 & 41.628 & 1.252\\
\textsc{ALOI} & 0.00001 & 1.894 & \textit{n/a} & \textit{n/a} & 0.515 & \textbf{0.446} & \textit{n/a} & \textit{n/a} & 24.127 & 31.434 & 1.236\\
\textsc{Census} & 0.01 & 21.167 & 0.718 & \textbf{0.117} & 0.749 & 0.207 & 2.056 & 69.093 & 286.878 & 375.744 & \textit{n/a}\\
\textsc{Census} & 0.001 & 22.291 & 6.963 & 1.239 & 1.550 & \textbf{0.891} & \textit{n/a} & \textit{n/a} & 160.692 & 353.577 & \textit{n/a}\\
\textsc{Census} & 0.0001 & 23.465 & 65.938 & 15.354 & 2.350 & \textbf{1.400} & \textit{n/a} & \textit{n/a} & 116.222 & 295.493 & \textit{n/a}\\
\textsc{Census} & 0.00001 & 28.710 & 278.683 & 62.955 & 2.520 & \textbf{1.582} & \textit{n/a} & \textit{n/a} & 88.905 & 249.631 & \textit{n/a}\\
\textsc{Covtype} & 0.01 & 6.305 & 0.612 & \textbf{0.091} & 0.637 & 0.128 & 0.708 & 43.133 & 23.449 & 24.650 & 3.334\\
\textsc{Covtype} & 0.001 & 6.467 & 4.844 & 0.808 & 1.294 & \textbf{0.672} & 6.032 & \textit{n/a} & 8.982 & 10.943 & 4.970\\
\textsc{Covtype} & 0.0001 & 5.984 & 47.721 & 6.112 & 1.423 & \textbf{1.079} & \textit{n/a} & \textit{n/a} & 2.900 & 4.348 & 5.289\\
\textsc{Covtype} & 0.00001 & 5.027 & \textit{n/a} & \textit{n/a} & 0.242 & \textbf{0.225} & \textit{n/a} & \textit{n/a} & 0.628 & 1.171 & 5.495\\
\textsc{GloVe} & 0.01 & 10.650 & 0.021 & \textbf{0.004} & 0.021 & 0.005 & 8.908 & 0.867 & 577.850 & 549.691 & \textit{n/a}\\
\textsc{GloVe} & 0.001 & 10.675 & 0.071 & 0.021 & 0.065 & \textbf{0.013} & \textit{n/a} & 1.769 & 574.288 & 551.816 & 1.423\\
\textsc{GloVe} & 0.0001 & 10.447 & 0.163 & \textbf{0.042} & 0.162 & 0.069 & \textit{n/a} & 38.040 & 578.709 & 551.685 & 3.548\\
\textsc{GloVe} & 0.00001 & 10.320 & 0.880 & 0.189 & 0.429 & \textbf{0.167} & \textit{n/a} & 1862.553 & 630.915 & 499.688 & \textit{n/a}\\
\textsc{lastfm} & 0.01 & 2.989 & 32.091 & 5.142 & 0.583 & \textbf{0.321} & \textit{n/a} & \textit{n/a} & 90.750 & 78.711 & \textit{n/a}\\
\textsc{lastfm} & 0.001 & 3.022 & 51.482 & 7.119 & 1.018 & \textbf{0.496} & \textit{n/a} & \textit{n/a} & 83.927 & 86.719 & \textit{n/a}\\
\textsc{lastfm} & 0.0001 & 3.046 & 43.680 & 7.196 & 2.683 & \textbf{0.848} & \textit{n/a} & \textit{n/a} & 89.954 & 87.373 & \textit{n/a}\\
\textsc{lastfm} & 0.00001 & \textbf{2.888} & \textit{n/a} & 7.133 & 7.260 & 3.841 & \textit{n/a} & \textit{n/a} & 88.133 & \textit{n/a} & \textit{n/a}\\
\textsc{MNIST} & 0.01 & 1.477 & 0.132 & 0.087 & 0.105 & \textbf{0.067} & 16.628 & 8.517 & 95.594 & 63.673 & 0.684\\
\textsc{MNIST} & 0.001 & 1.466 & 0.939 & 0.602 & 0.679 & \textbf{0.382} & \textit{n/a} & 146.002 & 90.962 & 59.170 & 3.058\\
\textsc{MNIST} & 0.0001 & 1.594 & 5.409 & 3.385 & 1.655 & \textbf{1.085} & \textit{n/a} & 19838.656 & 96.430 & 63.810 & 2.984\\
\textsc{MNIST} & 0.00001 & \textbf{1.457} & 32.994 & 13.151 & 3.743 & 2.602 & \textit{n/a} & \textit{n/a} & 95.879 & 63.010 & 3.226\\
\textsc{MSD} & 0.01 & 4.891 & 3.546 & 0.434 & 0.574 & \textbf{0.256} & \textit{n/a} & \textit{n/a} & 161.008 & 188.234 & 5.108\\
\textsc{MSD} & 0.001 & 5.224 & 36.903 & 6.724 & 1.603 & \textbf{0.663} & \textit{n/a} & \textit{n/a} & 131.380 & 162.818 & 5.251\\
\textsc{MSD} & 0.0001 & 5.585 & 70.699 & 13.940 & 4.316 & \textbf{1.693} & \textit{n/a} & \textit{n/a} & 128.709 & 180.890 & 5.126\\
\textsc{MSD} & 0.00001 & 5.982 & 101.900 & 13.784 & 10.268 & \textbf{3.783} & \textit{n/a} & \textit{n/a} & 115.804 & 175.152 & 5.262\\
\textsc{Shuttle} & 0.01 & 0.519 & 0.368 & \textbf{0.045} & 0.270 & 0.086 & 1.153 & 16.054 & 2.291 & 2.919 & 0.329\\
\textsc{Shuttle} & 0.001 & 0.555 & 2.993 & 0.233 & 0.316 & \textbf{0.175} & 36.222 & \textit{n/a} & 1.297 & 2.631 & 0.331\\
\textsc{Shuttle} & 0.0001 & 0.497 & \textit{n/a} & \textit{n/a} & \textbf{0.084} & 0.087 & \textit{n/a} & \textit{n/a} & 0.622 & 2.230 & 0.363\\
\textsc{Shuttle} & 0.00001 & 0.411 & \textit{n/a} & \textit{n/a} & 0.049 & \textbf{0.047} & \textit{n/a} & \textit{n/a} & 0.284 & 1.972 & 0.407\\
\textsc{SVHN} & 0.01 & 40.710 & 1.129 & \textbf{0.836} & 0.856 & 1.145 & \textit{n/a} & \textit{n/a} & 2797.206 & 1925.226 & \textit{n/a}\\
\textsc{SVHN} & 0.001 & 40.653 & 4.440 & 4.545 & 4.377 & \textbf{2.841} & \textit{n/a} & \textit{n/a} & 2570.958 & 1920.286 & \textit{n/a}\\
\textsc{SVHN} & 0.0001 & 40.241 & 55.776 & 23.587 & 12.627 & \textbf{11.466} & \textit{n/a} & \textit{n/a} & 2548.359 & 1930.954 & \textit{n/a}\\
\textsc{SVHN} & 0.00001 & 41.309 & 279.773 & 140.986 & 52.181 & \textbf{36.585} & \textit{n/a} & \textit{n/a} & 2529.778 & 1869.640 & \textit{n/a}\\
\end{tabular}
  \end{center}
\end{table*}

\begin{table*}[h!!]
  \begin{center}
    \caption{Results of Evaluating the Different Algorithms Against
      the Test Set in Milliseconds / Query with the Gaussian Kernel
      with the Restriction that $k=m$.}
    \label{tbl:mainresultsgaussiankeqm}
\begin{tabular}{@{}l@{\,}l@{\,}r@{\,}r@{\,}r@{\,}r@{\,}r@{\,}r@{\,}r@{\,}r@{\,}r@{\,}r@{}}
Dataset & Target $\mu$ & \texttt{Naive} & \texttt{RS} & \texttt{RSP} & \texttt{DEANN} & \texttt{DEANNP} & \texttt{HBE} & \texttt{RSA} & \texttt{SKKD} & \texttt{SKBT} & \texttt{ASKIT}\\\hline\textsc{ALOI} & 0.01 & 1.036 & 0.329 & \textbf{0.084} & 0.270 & 0.193 & 4.748 & 15.502 & 51.882 & 43.581 & 0.491\\
\textsc{ALOI} & 0.001 & 1.002 & 7.858 & 1.676 & 0.768 & \textbf{0.338} & \textit{n/a} & 971.296 & 48.001 & 44.614 & 1.242\\
\textsc{ALOI} & 0.0001 & \textbf{1.079} & 16.646 & 5.128 & 2.586 & 1.245 & \textit{n/a} & \textit{n/a} & 36.452 & 41.628 & 1.252\\
\textsc{ALOI} & 0.00001 & 1.894 & \textit{n/a} & \textit{n/a} & 7.061 & 2.393 & \textit{n/a} & \textit{n/a} & 24.127 & 31.434 & \textbf{1.236}\\
\textsc{Census} & 0.01 & 21.167 & 0.718 & \textbf{0.117} & 0.778 & 0.561 & 2.056 & 69.093 & 286.878 & 375.744 & \textit{n/a}\\
\textsc{Census} & 0.001 & 22.291 & 6.963 & 1.239 & 2.098 & \textbf{1.192} & \textit{n/a} & \textit{n/a} & 160.692 & 353.577 & \textit{n/a}\\
\textsc{Census} & 0.0001 & 23.465 & 65.938 & 15.354 & 3.503 & \textbf{2.570} & \textit{n/a} & \textit{n/a} & 116.222 & 295.493 & \textit{n/a}\\
\textsc{Census} & 0.00001 & 28.710 & 278.683 & 62.955 & 3.496 & \textbf{2.625} & \textit{n/a} & \textit{n/a} & 88.905 & 249.631 & \textit{n/a}\\
\textsc{Covtype} & 0.01 & 6.305 & 0.612 & \textbf{0.091} & 0.654 & 0.399 & 0.708 & 43.133 & 23.449 & 24.650 & 3.334\\
\textsc{Covtype} & 0.001 & 6.467 & 4.844 & \textbf{0.808} & 1.900 & 0.983 & 6.032 & \textit{n/a} & 8.982 & 10.943 & 4.970\\
\textsc{Covtype} & 0.0001 & 5.984 & 47.721 & 6.112 & 2.443 & \textbf{1.924} & \textit{n/a} & \textit{n/a} & 2.900 & 4.348 & 5.289\\
\textsc{Covtype} & 0.00001 & 5.027 & \textit{n/a} & \textit{n/a} & 0.594 & \textbf{0.477} & \textit{n/a} & \textit{n/a} & 0.628 & 1.171 & 5.495\\
\textsc{GloVe} & 0.01 & 10.650 & 0.021 & \textbf{0.004} & 0.172 & 0.168 & 8.908 & 0.867 & 577.850 & 549.691 & \textit{n/a}\\
\textsc{GloVe} & 0.001 & 10.675 & 0.071 & \textbf{0.021} & 0.283 & 0.234 & \textit{n/a} & 1.769 & 574.288 & 551.816 & 1.423\\
\textsc{GloVe} & 0.0001 & 10.447 & 0.163 & \textbf{0.042} & 0.393 & 0.359 & \textit{n/a} & 38.040 & 578.709 & 551.685 & 3.548\\
\textsc{GloVe} & 0.00001 & 10.320 & 0.880 & \textbf{0.189} & 0.693 & 0.471 & \textit{n/a} & 1862.553 & 630.915 & 499.688 & \textit{n/a}\\
\textsc{lastfm} & 0.01 & 2.989 & 32.091 & 5.142 & 1.077 & \textbf{1.029} & \textit{n/a} & \textit{n/a} & 90.750 & 78.711 & \textit{n/a}\\
\textsc{lastfm} & 0.001 & \textbf{3.022} & 51.482 & 7.119 & 7.126 & 3.992 & \textit{n/a} & \textit{n/a} & 83.927 & 86.719 & \textit{n/a}\\
\textsc{lastfm} & 0.0001 & \textbf{3.046} & 43.680 & 7.196 & 14.758 & 6.060 & \textit{n/a} & \textit{n/a} & 89.954 & 87.373 & \textit{n/a}\\
\textsc{lastfm} & 0.00001 & \textbf{2.888} & \textit{n/a} & 7.133 & 30.810 & 10.142 & \textit{n/a} & \textit{n/a} & 88.133 & \textit{n/a} & \textit{n/a}\\
\textsc{MNIST} & 0.01 & 1.477 & 0.132 & \textbf{0.087} & 0.300 & 0.310 & 16.628 & 8.517 & 95.594 & 63.673 & 0.684\\
\textsc{MNIST} & 0.001 & 1.466 & 0.939 & 0.602 & 0.679 & \textbf{0.572} & \textit{n/a} & 146.002 & 90.962 & 59.170 & 3.058\\
\textsc{MNIST} & 0.0001 & 1.594 & 5.409 & 3.385 & 1.655 & \textbf{1.392} & \textit{n/a} & 19838.656 & 96.430 & 63.810 & 2.984\\
\textsc{MNIST} & 0.00001 & \textbf{1.457} & 32.994 & 13.151 & 4.944 & 3.987 & \textit{n/a} & \textit{n/a} & 95.879 & 63.010 & 3.226\\
\textsc{MSD} & 0.01 & 4.891 & 3.546 & \textbf{0.434} & 1.118 & 0.671 & \textit{n/a} & \textit{n/a} & 161.008 & 188.234 & 5.108\\
\textsc{MSD} & 0.001 & 5.224 & 36.903 & 6.724 & 2.635 & \textbf{2.143} & \textit{n/a} & \textit{n/a} & 131.380 & 162.818 & 5.251\\
\textsc{MSD} & 0.0001 & 5.585 & 70.699 & 13.940 & 13.175 & 6.986 & \textit{n/a} & \textit{n/a} & 128.709 & 180.890 & \textbf{5.126}\\
\textsc{MSD} & 0.00001 & 5.982 & 101.900 & 13.784 & 18.616 & 15.204 & \textit{n/a} & \textit{n/a} & 115.804 & 175.152 & \textbf{5.262}\\
\textsc{Shuttle} & 0.01 & 0.519 & 0.368 & \textbf{0.045} & 0.270 & 0.194 & 1.153 & 16.054 & 2.291 & 2.919 & 0.329\\
\textsc{Shuttle} & 0.001 & 0.555 & 2.993 & \textbf{0.233} & 0.316 & 0.272 & 36.222 & \textit{n/a} & 1.297 & 2.631 & 0.331\\
\textsc{Shuttle} & 0.0001 & 0.497 & \textit{n/a} & \textit{n/a} & \textbf{0.133} & 0.156 & \textit{n/a} & \textit{n/a} & 0.622 & 2.230 & 0.363\\
\textsc{Shuttle} & 0.00001 & 0.411 & \textit{n/a} & \textit{n/a} & \textbf{0.051} & 0.052 & \textit{n/a} & \textit{n/a} & 0.284 & 1.972 & 0.407\\
\textsc{SVHN} & 0.01 & 40.710 & 1.129 & \textbf{0.836} & 4.370 & 3.141 & \textit{n/a} & \textit{n/a} & 2797.206 & 1925.226 & \textit{n/a}\\
\textsc{SVHN} & 0.001 & 40.653 & \textbf{4.440} & 4.545 & 6.220 & 6.680 & \textit{n/a} & \textit{n/a} & 2570.958 & 1920.286 & \textit{n/a}\\
\textsc{SVHN} & 0.0001 & 40.241 & 55.776 & 23.587 & \textbf{13.814} & 14.836 & \textit{n/a} & \textit{n/a} & 2548.359 & 1930.954 & \textit{n/a}\\
\textsc{SVHN} & 0.00001 & \textbf{41.309} & 279.773 & 140.986 & 52.181 & 43.192 & \textit{n/a} & \textit{n/a} & 2529.778 & 1869.640 & \textit{n/a}\\
\end{tabular}
  \end{center}
\end{table*}

\begin{table*}[h!!]
  \begin{center}
    \caption{Runtime Ratio Between Best Parameters and $k=m$ Parameters.}
    \label{tbl:mainresultsgaussianratio}
\begin{tabular}{@{}l@{\,}l@{\,}r@{\,}r@{}}
Dataset & Target $\mu$ & \texttt{DEANN} & \texttt{DEANNP}\\\hline
\textsc{ALOI} & 0.01 & 0.943 & 2.178\\
\textsc{ALOI} & 0.001 & 1.000 & 1.141\\
\textsc{ALOI} & 0.0001 & 1.666 & 2.397\\
\textsc{ALOI} & 0.00001 & 13.707 & 5.366\\
\textsc{Census} & 0.01 & 1.040 & 2.707\\
\textsc{Census} & 0.001 & 1.353 & 1.339\\
\textsc{Census} & 0.0001 & 1.491 & 1.836\\
\textsc{Census} & 0.00001 & 1.387 & 1.659\\
\textsc{Covtype} & 0.01 & 1.028 & 3.122\\
\textsc{Covtype} & 0.001 & 1.468 & 1.463\\
\textsc{Covtype} & 0.0001 & 1.716 & 1.784\\
\textsc{Covtype} & 0.00001 & 2.454 & 2.115\\
\textsc{GloVe} & 0.01 & 8.295 & 33.587\\
\textsc{GloVe} & 0.001 & 4.378 & 18.506\\
\textsc{GloVe} & 0.0001 & 2.430 & 5.221\\
\textsc{GloVe} & 0.00001 & 1.616 & 2.811\\
\textsc{last.fm} & 0.01 & 1.848 & 3.209\\
\textsc{last.fm} & 0.001 & 7.000 & 8.051\\
\textsc{last.fm} & 0.0001 & 5.500 & 7.148\\
\textsc{last.fm} & 0.00001 & 4.244 & 2.640\\
\textsc{MNIST} & 0.01 & 2.861 & 4.647\\
\textsc{MNIST} & 0.001 & 1.000 & 1.500\\
\textsc{MNIST} & 0.0001 & 1.000 & 1.282\\
\textsc{MNIST} & 0.00001 & 1.321 & 1.532\\
\textsc{MSD} & 0.01 & 1.949 & 2.618\\
\textsc{MSD} & 0.001 & 1.644 & 3.234\\
\textsc{MSD} & 0.0001 & 3.052 & 4.126\\
\textsc{MSD} & 0.00001 & 1.813 & 4.019\\
\textsc{Shuttle} & 0.01 & 1.000 & 2.245\\
\textsc{Shuttle} & 0.001 & 1.000 & 1.550\\
\textsc{Shuttle} & 0.0001 & 1.576 & 1.779\\
\textsc{Shuttle} & 0.00001 & 1.044 & 1.106\\
\textsc{SVHN} & 0.01 & 5.103 & 2.743\\
\textsc{SVHN} & 0.001 & 1.421 & 2.351\\
\textsc{SVHN} & 0.0001 & 1.094 & 1.294\\
\textsc{SVHN} & 0.00001 & 1.000 & 1.181\\
\end{tabular}

  \end{center}
\end{table*}

\begin{table*}[h!!]
  \begin{center}
    \caption{Preprocessing Times When Evaluating the Different Algorithms Against
      the Test Set with the Gaussian Kernel in Seconds.}
    \label{tbl:buildtimesfull}
\begin{tabular}{@{}l@{\,}l@{\,}r@{\,}r@{\,}r@{\,}r@{\,}r@{\,}r@{\,}r@{\,}r@{\,}r@{\,}r@{}}
Dataset & Target $\mu$ & \texttt{Naive} & \texttt{RS} & \texttt{RSP} & \texttt{DEANN} & \texttt{DEANNP} & \texttt{HBE} & \texttt{RSA} & \texttt{SKKD} & \texttt{SKBT} & \texttt{ASKIT}\\\hline\textsc{ALOI} & 0.01 & 0.006 & \textbf{0.000} & 0.055 & 0.377 & 8.775 & 22.285 & 0.000 & 4.929 & 5.155 & 21.455\\
\textsc{ALOI} & 0.001 & 0.006 & \textbf{0.000} & 0.059 & 1.078 & 0.975 & \textit{n/a} & 0.000 & 5.349 & 5.680 & 6.626\\
\textsc{ALOI} & 0.0001 & 0.006 & \textbf{0.000} & 0.056 & 0.146 & 0.153 & \textit{n/a} & \textit{n/a} & 5.588 & 5.756 & 6.425\\
\textsc{ALOI} & 0.00001 & \textbf{0.006} & \textit{n/a} & \textit{n/a} & 0.154 & 0.146 & \textit{n/a} & \textit{n/a} & 5.782 & 4.949 & 6.372\\
\textsc{Census} & 0.01 & 0.081 & \textbf{0.000} & 0.945 & 3.568 & 14.269 & 101.727 & 0.000 & 25573.250 & 22917.678 & \textit{n/a}\\
\textsc{Census} & 0.001 & 0.077 & \textbf{0.000} & 0.922 & 14.238 & 39.615 & \textit{n/a} & \textit{n/a} & 25405.120 & 25516.643 & \textit{n/a}\\
\textsc{Census} & 0.0001 & 0.075 & \textbf{0.000} & 0.971 & 3.346 & 6.440 & \textit{n/a} & \textit{n/a} & 25486.341 & 25455.323 & \textit{n/a}\\
\textsc{Census} & 0.00001 & 0.068 & \textbf{0.000} & 0.938 & 2.498 & 3.345 & \textit{n/a} & \textit{n/a} & 23033.750 & 23078.722 & \textit{n/a}\\
\textsc{Covtype} & 0.01 & 0.017 & \textbf{0.000} & 0.179 & 26.056 & 0.336 & 11.008 & 0.000 & 5.644 & 4.098 & 572.824\\
\textsc{Covtype} & 0.001 & 0.017 & \textbf{0.000} & 0.189 & 0.439 & 7.590 & 40.260 & \textit{n/a} & 5.413 & 4.360 & 339.547\\
\textsc{Covtype} & 0.0001 & 0.016 & \textbf{0.000} & 0.186 & 0.336 & 0.340 & \textit{n/a} & \textit{n/a} & 5.443 & 4.576 & 76.198\\
\textsc{Covtype} & 0.00001 & \textbf{0.016} & \textit{n/a} & \textit{n/a} & 0.593 & 0.621 & \textit{n/a} & \textit{n/a} & 5.026 & 4.010 & 75.267\\
\textsc{GloVe} & 0.01 & 0.052 & \textbf{0.000} & 0.553 & 1.067 & 2.521 & 135.074 & 0.000 & 29.145 & 30.301 & \textit{n/a}\\
\textsc{GloVe} & 0.001 & 0.049 & \textbf{0.000} & 0.553 & 13.145 & 2.561 & \textit{n/a} & 0.000 & 28.012 & 30.064 & 175.622\\
\textsc{GloVe} & 0.0001 & 0.044 & \textbf{0.000} & 0.553 & 1.240 & 1.267 & \textit{n/a} & 0.000 & 28.417 & 30.106 & 578.841\\
\textsc{GloVe} & 0.00001 & 0.044 & \textbf{0.000} & 0.542 & 144.898 & 12.779 & \textit{n/a} & 0.000 & 33.952 & 26.619 & \textit{n/a}\\
\textsc{last.fm} & 0.01 & 0.010 & \textbf{0.000} & 0.106 & 2.320 & 7.022 & \textit{n/a} & \textit{n/a} & 4.327 & 3.140 & \textit{n/a}\\
\textsc{last.fm} & 0.001 & 0.010 & \textbf{0.000} & 0.105 & 2.317 & 2.331 & \textit{n/a} & \textit{n/a} & 4.168 & 3.226 & \textit{n/a}\\
\textsc{last.fm} & 0.0001 & 0.009 & \textbf{0.000} & 0.108 & 0.249 & 2.312 & \textit{n/a} & \textit{n/a} & 4.148 & 3.193 & \textit{n/a}\\
\textsc{last.fm} & 0.00001 & \textbf{0.009} & \textit{n/a} & 0.107 & 0.794 & 0.850 & \textit{n/a} & \textit{n/a} & 4.125 & \textit{n/a} & \textit{n/a}\\
\textsc{MNIST} & 0.01 & 0.017 & \textbf{0.000} & 0.159 & 1.700 & 0.813 & 100.323 & 0.000 & 12.369 & 11.154 & 14.053\\
\textsc{MNIST} & 0.001 & 0.017 & \textbf{0.000} & 0.156 & 1.739 & 1.650 & \textit{n/a} & 0.000 & 12.051 & 10.671 & 4.424\\
\textsc{MNIST} & 0.0001 & 0.016 & \textbf{0.000} & 0.163 & 0.838 & 1.768 & \textit{n/a} & 0.000 & 12.097 & 11.201 & 4.423\\
\textsc{MNIST} & 0.00001 & 0.016 & \textbf{0.000} & 0.155 & 0.447 & 0.443 & \textit{n/a} & \textit{n/a} & 12.461 & 11.022 & 4.397\\
\textsc{MSD} & 0.01 & 0.020 & \textbf{0.000} & 0.223 & 9.319 & 9.395 & \textit{n/a} & \textit{n/a} & 12.378 & 10.359 & 144.805\\
\textsc{MSD} & 0.001 & 0.021 & \textbf{0.000} & 0.227 & 0.443 & 0.785 & \textit{n/a} & \textit{n/a} & 10.183 & 9.093 & 143.934\\
\textsc{MSD} & 0.0001 & 0.019 & \textbf{0.000} & 0.223 & 0.432 & 0.435 & \textit{n/a} & \textit{n/a} & 11.974 & 10.106 & 145.066\\
\textsc{MSD} & 0.00001 & 0.019 & \textbf{0.000} & 0.224 & 0.446 & 0.460 & \textit{n/a} & \textit{n/a} & 11.614 & 10.028 & 144.940\\
\textsc{Shuttle} & 0.01 & 0.001 & \textbf{0.000} & 0.007 & 0.238 & 0.070 & 2.006 & 0.000 & 0.687 & 0.658 & 0.593\\
\textsc{Shuttle} & 0.001 & 0.001 & \textbf{0.000} & 0.007 & 0.049 & 0.067 & 63.746 & \textit{n/a} & 0.659 & 0.621 & 1.634\\
\textsc{Shuttle} & 0.0001 & \textbf{0.001} & \textit{n/a} & \textit{n/a} & 0.046 & 0.048 & \textit{n/a} & \textit{n/a} & 0.670 & 0.639 & 1.679\\
\textsc{Shuttle} & 0.00001 & \textbf{0.001} & \textit{n/a} & \textit{n/a} & 0.262 & 0.257 & \textit{n/a} & \textit{n/a} & 0.636 & 0.619 & 1.542\\
\textsc{SVHN} & 0.01 & 0.583 & \textbf{0.000} & 5.590 & 262.613 & 1651.727 & \textit{n/a} & \textit{n/a} & 454.764 & 473.117 & \textit{n/a}\\
\textsc{SVHN} & 0.001 & 0.582 & \textbf{0.000} & 5.624 & 35.396 & 13.996 & \textit{n/a} & \textit{n/a} & 446.738 & 462.035 & \textit{n/a}\\
\textsc{SVHN} & 0.0001 & 0.657 & \textbf{0.000} & 5.662 & 36.748 & 35.219 & \textit{n/a} & \textit{n/a} & 445.377 & 463.516 & \textit{n/a}\\
\textsc{SVHN} & 0.00001 & 0.772 & \textbf{0.000} & 5.592 & 16.252 & 16.640 & \textit{n/a} & \textit{n/a} & 431.374 & 452.096 & \textit{n/a}\\
\end{tabular}
  \end{center}
\end{table*}

\section{Preprocessing Times}
\label{app:preprocessingtimes}

Table~\ref{tbl:buildtimesfull} lists the preprocessing times for the
various algorithms. The reported times are in seconds and were
obtained when evaluating the test using the Gaussian
kernel.

Generally, \texttt{RS} has the smallest construction time,
which is almost zero, since there is no data structure to construct;
the code only stores a pointer to the data. The larger (but still
insignificant) construction time for \texttt{Naive} is explained by
the fact that, upon construction, the data structure stores the
Euclidean norm of each vector in the training set. The construction
time for \texttt{RSP} consists of copying the training set into memory
in random order.

For \texttt{DEANN} and \texttt{DEANNP}, there is a huge variance among
the construction times. This is explained by the fact that it is
dominated by the construction time for FAISS, and is very sensitive to
the parameter $n_\ell$, the number of clusters. Thus the construction
time can range from almost-insignificant to rather long, depending on
the construction time of the underlying NN object. Like \texttt{RS},
\texttt{DEANN} has no construction time of its own, whereas
\texttt{DEANNP} performs the same initialization of random sampling as
\texttt{RSP}.

In all instances where a comparison could be made, \texttt{DEANNP} and
\texttt{DEANN} can be constructed considerably faster than
\texttt{HBE} or \texttt{ASKIT}. \texttt{SKKD} and \texttt{SKBT} are in
a class of their own with respect to construction times, reaching over
7 hours for the \textsc{Census} dataset.
\end{document}